\newcommand{\bl}[1]{\textcolor{black}{#1}}  
\theoremstyle{definition}
\algrenewcommand\algorithmicrequire{\textbf{Input:}}
\algrenewcommand\algorithmicensure{\textbf{Output:}}
\newcommand{\comment}[1]{}
\newtheorem{theorem}{Theorem}
\newtheorem{lemma}{Lemma}
\newtheorem{definition}{Definition}
\newtheorem{proposition}{Proposition}
\newtheorem{corollary}{Corollary}
\newtheorem{example}{Example}
\newtheorem{remark}{Remark}
\newmdtheoremenv{problem_stmt}{Problem}
\def\BibTeX{{\rm B\kern-.05em{\sc i\kern-.025em b}\kern-.08em
    T\kern-.1667em\lower.7ex\hbox{E}\kern-.125emX}}
\begin{document}
\bstctlcite{IEEEexample:BSTcontrol}
\title{
On Spatial-Provenance Recovery in Wireless Networks with Relaxed-Privacy Constraints\\
}

\author{\IEEEauthorblockN{Manish Bansal$^{\dagger}$, Pramsu Shrivastava$^{*}$, J. Harshan$^{\dagger, *}$,\\
$^{*}$Department of Electrical Engineering, $^{\dagger}$Bharti School of Telecommunication Technology and Management,\\
Indian Institute of Technology Delhi, India
}\\
}

\maketitle

\begin{abstract}
In Vehicle-to-Everything (V2X) networks with multi-hop communication, Road Side Units (RSUs) intend to gather location data from the vehicles to offer various location-based services. Although vehicles use the Global Positioning System (GPS) for navigation, they may refrain from sharing their exact GPS coordinates to the RSUs due to privacy considerations. Thus, to address the localization expectations of the RSUs and the privacy concerns of the vehicles, we introduce a relaxed-privacy model wherein the vehicles share their partial location information in order to avail the location-based services. To implement this notion of relaxed-privacy, we propose a low-latency protocol for spatial-provenance recovery, wherein vehicles use correlated linear Bloom filters to embed their position information. Our proposed spatial-provenance recovery process takes into account the resolution of localization, the underlying ad hoc protocol, and the coverage range of the wireless technology used by the vehicles. Through a rigorous theoretical analysis, we present extensive analysis on the underlying trade-off between relaxed-privacy and the communication-overhead of the protocol. Finally, using a wireless testbed, we show that our proposed method requires a few bits in the packet header to provide security features such as localizing a low-power jammer executing a denial-of-service attack.\looseness=-1 
\end{abstract}

\begin{IEEEkeywords}
Localization, Bloom filters, spatial-provenance, V2X, privacy, ZigBee, LoRa, and XBee. 
\end{IEEEkeywords}

\section{Introduction}
Vehicle-to-Everything (V2X) networks are anticipated to have a crucial impact on smart mobility by improving both road safety and traffic efficiency \cite{v2v}, \cite{v2x}. These networks provide an interface for inter-vehicle communication (V2V) and vehicle-to-infrastructure (V2I) communication through the use of Road Side Units (RSUs), which act as base stations akin to cellular networks. Since mission-critical data is transmitted across these networks using an underlying wireless technology, these networks are susceptible to cyber-security threats whose threat vectors are specific to the use-case of vehicular networks \cite{v2x_attacks}. Hence, future V2X networks must have the ability to monitor the various ``observables" in the network on every packet to mitigate security risks on the vehicles \cite{vanetapp}.\looseness=-1

Among the many critical tasks, autonomous vehicles in future V2X networks will have to coordinate among each other to exchange information about location-specific roadside conditions with one another and with the RSU. Therefore, localization of the vehicles is essential for the RSU to provide Location-based Services (LBS) to the vehicles and ensure seamless operation of autonomous vehicles. From the standpoint of security, localization of vehicles is also crucial for detecting threat vectors from targeted attacks, thereby ensuring a robust and secure vehicular network. Therefore, this study proposes innovative low-latency approaches to record the location information of the vehicles in V2X networks using data-flow logs, henceforth referred to as \emph{spatial-provenance}.\looseness=-1

Localization of vehicles that are in the direct coverage range of the RSU is well studied \cite{5G_multipath,mutipath_localization_1,mutipath_localization_2, V2V_assisted_coperative_localization}. However, in practical networks, direct communication between the vehicles and the RSU may be hindered due to limitations in the transmit power of vehicles or the sparse deployment of the RSUs \cite[Section~1]{sparse_RSU}. In this situation, it is well known that multi-hop communication facilitates the reliable transmission of messages from the source vehicle to the RSU via several intermediary vehicles in an ad-hoc manner. To provide LBS in multi-hop networks, the RSU must have the capability to remotely gather information about the physical location of the vehicles that either forwarded or originated the packet. In the context of security in multi-hop networks, it is important to identify the packet forwarders \cite{nodeembedding}, determine the order in which packets are forwarded \cite{amogh}, \cite{suraj}, and determine the physical location of the vehicles to localize the origin of security threats on the forwarding vehicles.\looseness=-1
\begin{figure}[ht!]
     \centering
    \includegraphics[trim={0 0 0 0},clip,scale=0.45]{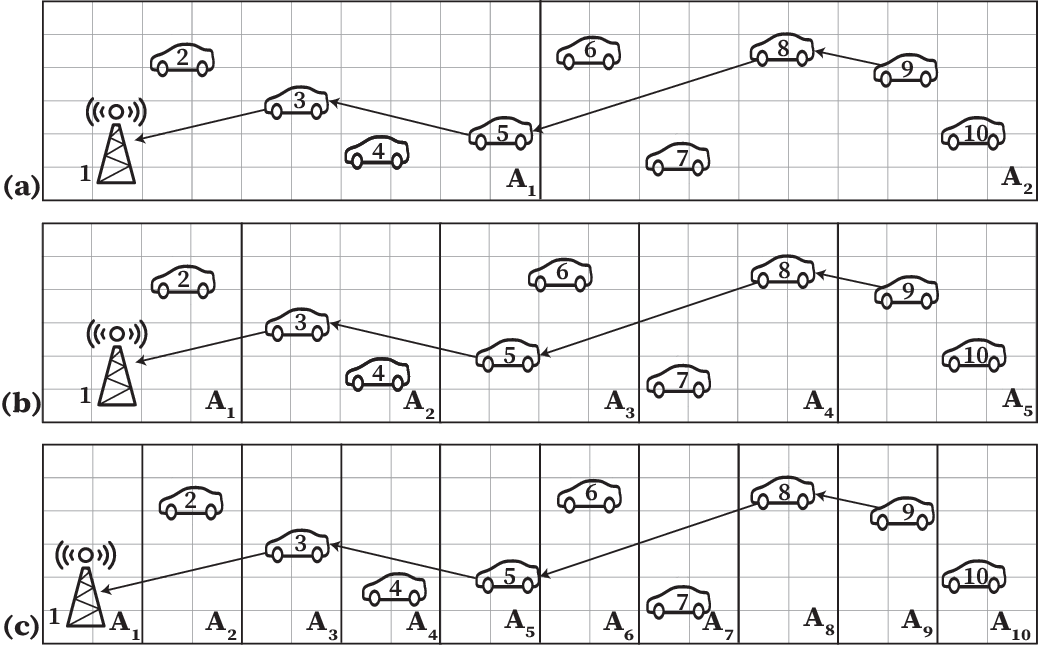}  
        \caption{A depiction of relaxed-privacy \textit{vis-à-vis} area segmentation wherein the vehicles are asked to reveal the identity of their segments instead of their exact location. Background grid depicts the precision of GPS.\looseness=-1}
        \label{fig: network Image_car}
         \vspace{-0.35cm}
\end{figure}
\subsection{Problem Statement}
Typically, vehicles utilize the Global Positioning System (GPS) for the purpose of navigation, and they have the capability to embed the same when transmitting the packet. Nevertheless, in a multi-hop communication configuration, the vehicles involved in forwarding the packet may have reservations about disclosing their precise location due to privacy concerns \cite{LPPS}. Although providing confidentiality to GPS coordinates through crypto-primitives in the packet can provide privacy to the vehicle's location from third-party observers, it also leads to an increase in end-to-end delay for the packets \cite[Section~1]{LPPS}. Moreover, it reveals the exact location of the vehicles to the RSU, which may not be desirable by the vehicles. For instance, vehicles may not want their traffic-behavioural pattern to be captured by the RSU, which can be evidently recovered by using the GPS coordinates \cite{LPPS}. However, to receive the benefits of LBS from the RSU, vehicles may be willing to reveal their location up to a certain granularity and only to the RSU. Once the granularity of localization is mutually agreed between the vehicles and the RSU, this information can be communicated to the RSU in plaintext by appending it to the packet at every forwarding vehicle. However, this approach will cause inter-vehicle privacy concerns and also increase packet size linearly with the number of hops. On the other hand, using crypto-primitives on the coarse-level location information will additionally increase the end-to-end delay of the packets. Therefore, standard embedding techniques for sharing location information have limitations in either handling privacy and growing packet size, or reducing end-to-end delay.   
To fill the above research gaps, we ask the following questions: \emph{How can we design a low-latency protocol for multi-hop wireless networks that captures the locations of the forwarding nodes while preserving the privacy of the participating nodes? Is there any analytical tool that can help system designers choose optimal parameters for the proposed protocol for different network conditions? How to validate the protocol's practicality for the V2X networks in terms of its deployment challenges?}\looseness=-1


\begin{table*}
\caption{\label{table:summary}We have summarized our contributions and compared them with the relevant works.}
\centering
\resizebox{16cm}{!}{
\begin{tabular}{|c|c|c|c|}
\hline
\textbf{Main contributions} & \textbf{Specific} & \textbf{Salient features} & \textbf{Existing contributions}\\
 & \textbf{contributions} & &\\
\hline
Spatial-provenance recovery & Localization in multi-hop& Does not require complex map fusion. & Complex map fusion is required in \cite{mutipath_localization_1,mutipath_localization_2}.\\ 
with relaxed-privacy constraints && Localization of all the participating vehicles in a single packet. & \cite{V2V_assisted_coperative_localization} localizes a single vehicle using multiple vehicles. \\

&& Less number of packets required for localization. & Multiple packets are required for localization \cite{V2V_assisted_coperative_localization}.\\

&& Does not require multipath signals for localization. & \cite{5G_multipath, mutipath_localization_1,mutipath_localization_2, V2V_assisted_coperative_localization} complex calculation of multipath signals. \\
 & Communication overhead & Highly reliable, low-overhead localization in multi-hop networks.  &\cite{privacy_vanet,Location_privacy_k_anonymity, Privacy_LBS_IoV} not applicable to multihop network.\\

  &  Relaxed-privacy & Computationally efficient. & \cite{Location_privacy_k_anonymity} computationally expensive. \\
  
  & Variable coverage & Our method supports variable transmission ranges. & Prior works address only one wireless protocol.\\
 & Security & No changes to protocols or infrastructure are needed.   &  \\

  && Localizes jammer without complex signal-strength calculations. & \cite{Jammer_localisation_1} relies on complex signal-strength calculations. \\
  &&& \cite{Jammer_Dos} uses Turing machines to detect jammer-immune channels.  \\
  && Does not require complex ML algorithms. & \cite{CR_Vanet} uses  ML algorithms to detect rogue nodes.\\

& Spatial-provenance  & We use CLBF to obtain edge information and node locations. & \cite{amogh}, \cite{suraj} use Bloom filters to recover edge information.\\

& Testbed implementation  & We implemented spatial provenance over protocols like ZigBee and LoRa.  &  \\
\hline
\end{tabular}}
\end{table*}
\subsection{Contributions}
To achieve a balanced integration of the RSU's localization requirements and the privacy concerns of vehicles, we assume that all the vehicles agree to disclose the same granularity\footnote{Vehicles may have varying privacy concerns, which may influence their willingness to disclose different levels of accuracy about their localization. Although privacy concerns of vehicles are generally heterogeneous, we have first focused on the homogeneous privacy of vehicles. Please refer to Section \ref{sec:discussion} for a discussion on heterogeneous privacy of the nodes.} of their location with the RSU. Furthermore, 
we suggest a \textit{divide and conquer} solution, wherein, the RSU will first divide its coverage area into equal-sized segments\footnote{\label{linear}Our idea of segmentation is applicable to intricate road models \cite{IoV_models} such as those with bi-directional roads with multiple lanes or multi roads stacked onto each other such as elevated roads, bridges on roads etc. For the latter generalized models, our ideas can be implemented either by having a dedicated RSU to each road or by having a single RSU serving multiple roads using multiple-access techniques. Please refer to Section \ref{sec:discussion} where we discuss how our proposed methodologies can be generalized to non-linear road models.} and then direct the vehicles to include the identities of their respective segments when transmitting the packets. To support this, vehicles can collectively ascertain a suitable dimension of the segment without revealing their exact positions within the segment. This choice of segment size becomes a critical factor in establishing an amicable solution to the trade-off between RSU localization requirements and vehicle location privacy. Since the exact locations of the vehicles are not shared in this model, we refer to this privacy model as \emph{relaxed-privacy} model in vehicular networks\footnote{The scope of our work extends to any wireless network in which the service provider or access point seeks to acquire knowledge about the geographical positions of the nodes inside a multi-hop network.}. For instance, in Fig. \ref{fig: network Image_car}, several possibilities are shown where the coverage area of the RSU is divided into different possible segments based on the privacy requirements of the vehicles. In the context of the stated problem, our specific contributions are summarized as follows:\looseness=-1

    1) To provide the relaxed-privacy feature, we model the roads linearly\footref{linear} in the context of vehicular networks, wherein a linear stretch of road in the communication range of the RSU is divided into linear segments of equal length, as exemplified in Fig. \ref{fig: network Image_car}. Subsequently, we propose the RSU to broadcast the segmentation information to the vehicles as a dictionary, using which, the vehicles can learn and embed their segment's identity in the packet with the help of their GPS coordinates \cite{spatial}. This way, the vehicles and the RSU can reach an amicable solution for the relaxed-privacy problem (\bl{please refer to Section \ref{sec:network_model}}).\looseness=-1
    
    2) To implement the above-proposed idea in the multi-hop setup, we propose that the vehicles use correlated linear Bloom filters (CLBF) to embed their spatial-provenance information. We highlight that CLBF is well-suited for low-latency demands in V2X networks since it offers the property of constant time for both the embedding and the recovery processes. As a salient feature of our protocol, the recovery mechanism of spatial-provenance from CLBF takes into account the underlying multi-hop protocol and the maximum transmission power of the underlying wireless technology used in the vehicular network (\bl{please refer to Section \ref{sec:Bloom Filters}}).\looseness=-1 
    
    3)  Although Bloom filters assist in low-latency recovery, it is well known that they are accompanied by error rates due to false positive events, which are dependent on Bloom filter size and the number of Hash functions used. To help the vehicles choose the appropriate parameters of the Bloom filter when using our scheme, we derive analytical expressions for error rates in recovering the spatial-provenance from CLBF. Specifically, we provide an algorithmic approach to calculate the error rates when the underlying wireless technology used in the vehicular network has an arbitrary transmission range. As a specific solution, we also provide a closed-form expression on the error rates when the wireless technology used by all the vehicles has a low range relative to the relaxed-privacy (\bl{please refer to Section \ref{sec:optimization} and Section \ref{sec:optimization_of_BF2}}). 
    To validate the impact of our method, we compare the derived analytical expressions with the simulation results, which show that the analytical expressions offer near-optimal Bloom filter parameters (\bl{please refer to Section \ref{sec:results}}).\looseness=-1
    
    4) After presenting a thorough analysis on the trade-off between privacy and error-rates, we showcase the implementation of our protocol on a testbed involving ZigBee and LoRa devices in the real world. 
    As a takeaway point, we show our technique requires a few bits in the packet header of the packet in order to help the RSU to localize the vehicles. We highlight that by virtue of the underlying tools, our protocol is applicable to large-scale V2X networks comprising an arbitrary number of vehicles in the coverage area of the RSU. We also discuss the resilience of our proposed protocols against the packet drop attacks and man-in-the-middle attacks (\bl{please refer to Section \ref{sec:Security_analysis}}).\looseness=-1 

We finally conclude by providing a security use-case of our method in vehicular networks that shows that our method can help localize low-power jammers.\looseness=-1 

\subsection{Related Works \& Novelty}
For the purpose of location privacy, various LBS solutions have been proposed in \cite{privacy_vanet, Location_privacy_k_anonymity, Privacy_LBS_IoV}. The solution in \cite{privacy_vanet} is applicable for direct communication between the RSU and the vehicle but not for multi-hop networks. Privacy-preserving localization methods in \cite{Location_privacy_k_anonymity} preserve a vehicle's location by dynamically generating virtual locations; however, they are computationally expensive. Privacy-preserving LBS in \cite{Privacy_LBS_IoV} uses a dedicated server, allocating pseudo identity, but shares GPS coordinates in plain text, making it susceptible to security threats.\looseness=-1

To help accurate localization, radio-based techniques improve precision by using existing network infrastructure without additional sensors \cite{5G_multipath}.  For such multipath-assisted localization methods, we refer to \cite{mutipath_localization_1}, \cite{mutipath_localization_2}, \cite{V2V_assisted_coperative_localization}. However, they are not applicable to multi-hop networks.\looseness = -1

V2X networks are vulnerable to attacks from malevolent adversaries due to shared wireless channels \cite{survey_wireless_security,jammer_localisation_survey, Dos_jammer}.
As a result, various anti-jamming solutions have been proposed in \cite{freq_hop_1,Freq_hop_2,routing_1} to mitigate the effects of a jamming attack and to restore regular network operations. Methods for detecting and precisely locating jammers using signal strength are proposed in \cite{Jammer_localisation_1}. However, this strategy necessitates the detection of the jamming signal intensity locally at the vehicles. Authors in \cite{antenna_identification_jammer_localisation} propose an approach for identifying and locating the jammer using the topology information of the jamming scenario, specifically for directional antennas. Authors in \cite{Jammer_Dos} classify the jammers as partial or full knowledge jammers and use turing machines to algorithmically characterize the channels that are immune to jamming attacks. To improve network security, machine learning (ML) methods that classify the rogue nodes from the primary nodes are proposed in \cite{CR_Vanet}. However, large amounts of data are needed to utilize these algorithms.\looseness=-1  

Overall, the following points define the novelty of our work.\looseness=-1 

1) \textit{Localization in multi-hop network}- Most of the earlier works have focused on the localization of the penultimate node, where the RSU gets the location of the node that is in the last hop. Our work focuses on getting the location of all the intermediate nodes of multi-hop communication within a single packet.\looseness=-1
    
2) \textit{Relaxed-privacy}- Previous works have not looked at relaxed-privacy; we tackle the relaxed-privacy problem by introducing the notion of dividing the coverage area of RSU into a number of segments.\looseness=-1 
    
3) \textit{Variable coverage}- Existing works do not look at wide range of standards with variable transmission ranges. Our method is applicable for a wide range of standards such as Wi-Fi, ZigBee, LoRa, etc.\looseness=-1
    
4) \textit{Low-latency strategy}- Because of the tools and the design methodology used, our work is suitable for the low-latency demands of V2X networks.\looseness=-1

Some preliminary versions of this work are available in \cite{manish_WCNC} and \cite{manish_COMSNETS}. In particular, \cite{manish_WCNC} addresses a solution to the relaxed-privacy problem only when the underlying vehicles have specific transmission power constraints. In contrast, this work has results applicable to arbitrary transmission power constraints. \cite{manish_WCNC} does not contain any comparison with relevant baselines. In contrast, this work has a thorough comparison with GPS-based baselines by quantifying the notion of relaxed-privacy. The contributions in \cite{manish_COMSNETS} have testbed results for studying the communication-overheads of the proposed protocol; however, this work covers testbed implementation by additionally capturing a security use-case of localizing a low-power jammer. 

\subsection{Key Terms}
The three important keywords relevant to our work are Bloom filter, relaxed-privacy, and spatial-provenance.\looseness=-1 

{\bf  Bloom filter:} A Bloom filter is one of the probabilistic data structures that is used for testing the membership of an element within a given set. It has several advantageous properties, such as fixed size, guarantee of no false negatives, and constant time-complexity for both insertion and query operations. A traditional Bloom filter is a binary array of size $m$ bits, where elements are inserted by setting the bits in the binary array, and the position of bits is generated using multiple Hash functions. There are many varieties of Bloom filters available in the literature \cite{BF_survey, BF_survey2}.\looseness=-1

{\bf Relaxed-privacy:} A privacy model is referred to as a relaxed-privacy model wherein the network nodes agree to disclose partial information about their attributes, instead of providing complete anonymity, in order to support the system functionality. For example, in V2X networks under relaxed-privacy model and when the attribute of interest is location of the vehicle, a vehicle may share its coarse-level location information with the RSU rather than sharing its precise GPS coordinates to protect its location privacy while still supporting network activities like routing or provenance tracking, and in return, benefiting from the location-based services provided by the RSU.\looseness=-1

{\bf Spatial-provenance:} Spatial-provenance refers to the information on the identity of the forwarding nodes along with their location in a multihop network when forwarding the data packet from the source node towards the destination node. In the context of V2X networks, spatial-provenance is a generalized definition of provenance that is already known in literature \cite{nodeembedding, amogh}. It is beneficial for authenticating the data origin, maintaining trustworthiness in movement-aware applications, and help in detecting abnormal routing behavior.\looseness=-1
\vspace{-0.2cm}
\subsection{Executive Summary}\label{sec:Executive_summary}
Section \ref{sec:network_model} presents the foundational V2X use case that underpins the design of the proposed protocol, which outlines the underlying assumptions related to the road model, routing and communication constraints, and the mobility of vehicles with respect to the RSU. Section \ref{sec:Bloom Filters} provides the core mechanisms of the proposed protocol, where a basic understanding of the Bloom filters helps readers to grasp the working principles of our proposed protocol. Section \ref{sec:optimization_of_BF2} helps system designers to choose optimal parameters for implementing the proposed protocol for varying network conditions, where we present a mathematical framework and a low-complexity algorithm for optimizing the key parameters involved in embedding the spatial-provenance in the CLBF. Section \ref{sec:results} provides experimental results to showcase the efficacy of the mathematical framework derived in Section \ref{sec:optimization_of_BF2} and demonstrate the practical effectiveness of our approach by comparing it with GPS-based localization methods. Section \ref{sec:testbed_results_practical_aspects} provides insights into the practical implementation of our proposed protocol on real-world V2X networks, such as compression of provenance, testbed results, and other details related to the mobility patterns of the vehicles. Section \ref{sec:Security_analysis} explores the privacy and security dimensions of our proposed protocol. In Section \ref{sec:discussion}, we conclude with a discussion about how our suggested provenance protocols can be incorporated into more complex road network scenarios.\looseness=-1

\section{Network Model for Joint Localization and Provenance} \label{sec:network_model}

We consider a set of $N$ nodes, out of which $N-1$ nodes are mobile vehicles, and one node is the stationary RSU. This network model can be seen as a snapshot of a vehicular network wherein the mobiles vehicles are in the coverage area of an RSU by the virtue of their mobility.\footnote{For authenticating a vehicle in its coverage area, we may assume that gateway authentication techniques \cite[Section~2.2]{suraj} are applied by the RSU.} Mobile nodes want to communicate with the RSU; however, due to limited power constraints, they cannot communicate directly with the RSU.\footnote{Communication messages in the context of V2X networks can be categorized into two main types, namely: location-based service messages and cooperative awareness messages \cite{LPPS}. The communication messages in this context mostly belong to the category of LBS class.} Therefore, nodes communicate with the RSU with the help of several intermediate nodes in a multi-hop manner using a decode and forward strategy. To enhance the network's security and to offer advanced network-diagnostic capabilities, the RSU aims to acquire knowledge of the path traveled by the packets, and the precise locations of all the nodes in the network that have relayed the packets.
However, due to privacy concerns, vehicles may not share their exact GPS coordinates with the RSU \cite[Section~1]{LPPS}. Therefore, to balance the requirements of the RSU and the privacy of vehicles, the RSU divides its coverage area into $r$ geographical regions, for some $r \in \mathbb{N}$, and requests the nodes to share the identities of their regions while they forward the packet. In this context, $r$ can be an arbitrary number, which is \emph{a priori} fixed based on an amicable solution between the privacy requirement of the nodes and the precision requirement of the RSU.\looseness=-1

Given the use-case of vehicular networks, we assume that the RSU models its coverage area along a straight line and divides it into smaller regions of equal length, henceforth referred to as segments. The set of segments is denoted by $\Delta \triangleq \{A_1,A_2,\ldots,A_{r}\}$, as exemplified in Fig. \ref{fig: network Image_car} for $r =2,5$ or $10$. We assume that the RSU is in area $A_1$, i.e., at one corner of the coverage area, whereas the other mobile nodes are randomly distributed across $\Delta$. We also assume that the segments farther from the RSU get higher index numbers\footnote{For exposition, the positioning of the RSU is at one end of the coverage area, i.e., in $A_1$. However, in practice, given the omnidirectional nature of radio propagation, the RSU can also cater to vehicles on the other side of the road with identical segmentation via orthogonal radio resources.}. For instance, the farthest segment in the coverage area is $A_{r}$. The above-mentioned segmentation of the coverage area is designed to ensure that the nodes learn their segment identity and embed the same when forwarding the packet to the RSU. This way, the relaxed-privacy constraint of the nodes is preserved with the granularity captured by $\Delta$. To assist the nodes in learning their segments, the RSU broadcasts a dictionary containing the boundaries of GPS locations mapped to different segments\footnote{ The broadcast of the dictionary can either be continuous or periodic. However, to avoid excessive channel usage and unnecessary energy consumption for both the RSU and the vehicles, we suggest that the RSU broadcasts the dictionary periodically, as long as the broadcast interval is appropriately chosen. For more details on the dictionary broadcast, please refer to Section \ref{sec:Dictionary}}. Since the nodes are equipped with GPS, they can privately derive their segment identity (ID) using the broadcasted dictionary. For details on sharing the dictionary privately through a broadcast message, we refer the reader to \cite{spatial}. It is assumed that the RSU is a high-power device; therefore, the broadcast messages of RSU can reach the farthest segment, i.e., $A_r$ on the downlink.\looseness=-1  

To elucidate the spatial-provenance embedding procedure, the identity of the source node is designated as $I_s$, the identity of the RSU is $I_{N}$, and the identities of the remaining mobile nodes are denoted as $I_n$, where $n \in [1,~N-2]$. Additionally, to capture the mapping between $\Delta$ and the nodes' GPS coordinates, we employ the function $g: \{I_1, I_2,\ldots,I_{N-2}, I_s\} \rightarrow \Delta $ for translating the node identities into segment identities. We assume that the function $g$ is a constant mapping for a given coherent time when the packet is en-route from the source node $I_s$ towards the RSU. This assumption follows from the fact that the time taken by the vehicles to cross consecutive segments is typically much smaller compared to the end-to-end delay on the packet \cite[Section~2]{suraj}. However, the function $g$ may change at a later point in time when another packet is being transferred. The following section describes the routing protocol adopted by the nodes for their uplink communication with the RSU.\looseness=-1
\begin{figure}
    \centering
    \includegraphics[scale=1.2]{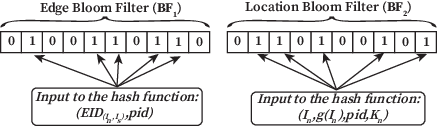}
    \caption{The diagram illustrates the embedding process in CLBF at the intermediate nodes, which consist of edge and location Bloom filters with sizes of $m_1 = 10$ bits and $m_2 = 10$ bits, respectively. We employ $k_1=k_2=5$ Hash functions to embed an element in CLBF.\looseness=-1
    }
    \label{fig:CLBF}
    \vspace{-0.4cm}
\end{figure}
\vspace{-0.4cm}
\subsection{Routing Constraints} \label{sec:Routing Constraints}
We assume that the following routing and communication constraints are followed in the network when the packets are communicated from the source node $I_{s}$ to the RSU.\looseness=-1

1) Routing Constraint: A routing protocol such as Ad-hoc on-demand Distance Vector (AODV) \cite{aodv_original} is used in the network, which ensures minimum-hop delivery of the packet from the source to the RSU. Therefore, the underlying routing protocol prevents the formation of any loop or back-hop. Formally, a node in the segment $A_i$, for $i \in [r]$ cannot send the packet to a node in the segment $A_j$, where $j \in [r]$ such that $i< j$.\looseness=-1

2) Communication Constraint: We assume that nodes have limited power constraints and are homogeneous in terms of their transmission range, i.e., each node can communicate only up to a certain distance. To capture this communication constraint, we use the parameter $\beta$, for $\beta \in \mathbb{N}$, which denotes the maximum number of adjacent segments a mobile node can communicate across. For example, if $\beta=2$ and a node is in segment $A_3$, it can only send the packets to the nodes in segment $A_3$, or $A_2$, or $A_1$, and likewise, can receive the packets from the nodes in segments $A_3$, $A_4$, or $A_5$. To mathematically capture this communication constraint, we use the function $S: \Delta \rightarrow [r]$, which translates the segment identity $A_i$ into their numerical identity $i$, for $i \in [r]$. Furthermore, with $I_i$ and $I_j$ representing the identities of two consecutive nodes in a $h$-hop path, they must satisfy the communication constraint $ |S(g(I_i))-S(g(I_j))|\leq \beta$, where $i\neq j$ and $i, j \in [N]\backslash \{N-1\} \cup \{s\}$.\looseness=-1 

With the above routing constraints, the objective of the nodes is to embed the information on their segment identities, as well as the path traveled by the packets, so that the same can be recovered at the RSU upon receiving the packets. In the context of latency-constrained V2X networks, it is typically desired to maintain a constant packet size during multi-hop communication. Therefore, in the next section, we present a Bloom filter \cite{Bloom_filter_1} based data-structure to achieve the above-mentioned objectives.\looseness=-1
\vspace{-0.3cm}
\section{Bloom Filter based Strategy for Recovering Spatial-Provenance} \label{sec:Bloom Filters}
To retrieve the path traveled by the packet and segment IDs of the nodes that have relayed it, we employ Correlated Linear Bloom Filters (CLBF) that extend the capabilities of the standard Bloom filter by enabling concurrent insertions and queries from multiple distinct sets to multiple Bloom filters. In the context of this work, we use CLBF with two Bloom filters, as shown in Fig. \ref{fig:CLBF}, wherein one will be used for embedding the information on the path traversed by the packet, and the other for carrying the information on the spatial location of the nodes. The choice of utilizing a Bloom filter for this purpose is driven by several advantageous properties, such as fixed size, guarantee of no false negatives, and constant time-complexity for both insertion and query operations.\looseness=-1
\vspace{-0.25cm}
\subsection{Ingredients for Embedding Spatial-Provenance} \label{sec:BF_ingredients}
To facilitate the provenance embedding process, the packet structure comprises a CLBF along with a counter to carry the hop-length information to the RSU. The CLBF has two Bloom filters, denoted as $\textbf{BF}_{1} \in \{0, 1\}^{m_{1}}$ and $\textbf{BF}_{2} \in \{0, 1\}^{m_{2}}$, wherein $\textbf{BF}_{1}$ is intended for embedding information on the path traveled by the packet, and $\textbf{BF}_{2}$ is for embedding information on the segment IDs of the nodes. At a high-level, when each node forwards the packet, it performs three tasks, namely (i) it embeds the identity of the edge connecting itself and the preceding node onto $\textbf{BF}_1$ using $k_{1}$ number of Hash functions for $1\leq k_1\leq m_1$ \cite{amogh, suraj}, (ii) it embeds the identity of its segment onto $\textbf{BF}_2$ using $k_{2}$ number of Hash functions for $1\leq k_2 \leq m_2$, and (iii) it increments the hop-counter. As a result, upon receiving the packet, the RSU will be able to jointly recover the path traveled by the packet and the corresponding location of the forwarding nodes.\looseness=-1 

For the edge embedding process onto $\textbf{BF}_{1}$, we assume that each node $I_i$, where $i \in [N] \backslash \{N-1\} \cup \{s\}$, is equipped with a pre-shared key with the RSU, denoted as $K_i$. Through mutual authentication with its neighbouring node, denoted as node $I_j$, where $j\neq i$, node $I_i$ obtains a distinct edge ID denoted as $EID_{(i,j)}$. One such approach to produce $EID_{(i,j)}$ involves utilizing $I_i$ and $I_j$ in conjunction with the key $K_i$ to derive a string $EID_{(i,j)} = Enc_{K_{i}}(I_i||I_j)$, where $Enc$ represents a suitable encryption method such as the Advanced Encryption Standard (AES). Note that the function $EID_{(\cdot, \cdot)}$, is not symmetric, i.e., $EID_{(i,j)}$ is not equal to $EID_{(j,i)}$, where $EID_{(j,i)} = Enc_{K_{j}}(I_{j} || I_{i})$ is the edge ID derived at node $I_j$ for the link between node $I_j$ and node $I_i$. We assume that the RSU has access to the IDs and secret keys of all the nodes, allowing it to obtain all the set of edge IDs in the network, represented as $\mathcal{E} = \{EID_{(i, j)}~|~ \forall ~i, j, ~j \neq i\}$.\looseness=-1

For embedding segment IDs onto $\textbf{BF}_{2}$, we assume that node $I_{i}$ has the knowledge of its segment using the dictionary broadcast by the RSU. For reference, Table \ref{tab:notation} provides a list of all the notations used in the description of the protocol. Using the above-mentioned ingredients, the following section describes our method to embed the spatial-provenance on $\textbf{BF}_{1}$ and $\textbf{BF}_{2}$.\looseness=-1

\begin{table} 
   \centering
   \caption{\label{tab:notation} Notations used in this paper.}
   \resizebox{7.8 cm}{!}{
   \begin{tabular}{|c|c|}
     \hline
     \textbf{Term} & \textbf{Meaning} \\
     \hline
     $N$ & Number of nodes in the network \\
     \hline
      $m_1$, $m_2$ & Size of $\textbf{BF}_1$ and $\textbf{BF}_2$, respectively (in bits) \\
     \hline
     $k_1$, $k_2$ & Number of the Hash function used in $\textbf{BF}_1$ and $\textbf{BF}_2$, respectively \\
     \hline
     $k_{1}^{*}$, $k_{2}^{*}$ & Optimal number of Hash function used in $\textbf{BF}_1$ and $\textbf{BF}_2$, respectively \\
     \hline
     $h$ & Number of hops traveled by the packet \\
     \hline
     $I_i$ & ID of $i^{th}$ node \\
     \hline
     $\beta$ & Communication constraint \\
     \hline
     $(i,j)$ & Edge connecting node $I_j$ to $I_i$, i.e., $I_j \rightarrow I_i$\\
     \hline
     $\Delta, ~|\Delta|$ & Set of segment IDs, cardinality of set $\Delta$, i.e., $|\Delta|=r$\\
     \hline
     $\alpha$ & Number of random bits lit in $\textbf{BF}_2$\\
     \hline
     $pid$ & Packet identity (ID)\\
     \hline
     $g ()$ & A function to translate $I_i \rightarrow \Delta$\\
     \hline
     $S()$ & A function to translate $\Delta \rightarrow [r]$\\
     \hline
     $\mathcal{C}$ & Set of $h$-hop path from $I_s$ to the RSU\\
     \hline
     $\mathcal{G}_{\beta}$ & Set of valid mappings $g(\cdot)$ between the node IDs and $\Delta$\\
     \hline
     $\mathcal{F}$ & Set of false pairs\\
     \hline
     $\mathcal{R}$ & Set of extra recoveries\\
     \hline
    
     \end{tabular}}
     \vspace{-0.3cm}
 \end{table} 
 \vspace{-0.3cm}
\subsection{Embedding Process for Spatial-Provenance} \label{sec:BF_embedding}

To explain the provenance embedding process, we assume that the packets are forwarded from the source node $I_{s}$ through a $h$-hop path $I_{s} \rightarrow I_{j_{1}} \rightarrow I_{j_{2}} \rightarrow \ldots \rightarrow I_{j_{h-2}} \rightarrow I_{j_{h-1}}$ before reaching the RSU. To start with, both Bloom filters are initialized to all-zero vectors, and the counter is also initialized to zero. At the source node $I_s$, $\textbf{BF}_1$ is untouched, whereas $\textbf{BF}_2$ is updated as follows. Using a Hash function $H_{m_2}(\cdot)$ that randomly picks an index in $[1, m_{2}]$ with uniform distribution, the source node $I_{s}$ generates $k_{2}$ indices denoted by $\{v_{I_s}^{(l)} \in [1, m_{2}]| 1\leq l \leq k_{2}\}$ in an independent manner as\looseness=-1
\begin{equation}
\label{eq:embed_BF2}
v_{I_s}^{(l)} = H_{m_2}(I_s||g(I_s)||pid||l||K_{s}),
\end{equation}
where $pid$ denotes the packet ID, and all the other notations are as explained earlier. Subsequently, the Bloom filter $\textbf{BF}_{2}$ is set to one as $\textbf{BF}_{2}[v_{I_s}^{(l)}] = 1$ on the chosen set of indices $\{v_{I_s}^{(l)} | 1\leq l \leq k_{2}\}$. Once $\textbf{BF}_2$ is updated at node $I_s$, the hop-counter is incremented by one, and the packet is forwarded to the next node $I_{j_{1}}$.\looseness=-1 

Upon receiving the packet from node $I_{s}$, node $I_{j_{1}}$ updates both $\textbf{BF}_{1}$ and $\textbf{BF}_{2}$ as follows. To embed the edge ID connecting node $I_{s}$ and $I_{j_{1}}$ onto $\textbf{BF}_{1}$, a Hash function $H_{m_1}(\cdot)$ that randomly picks an index in $[1, m_{1}]$ with uniform distribution is used. Then node $I_{j_{1}}$ generates $k_{1}$ indices denoted by $\{w_{I_{j_{1}}}^{(a)} \in [1, m_{1}]| ~1\leq l \leq k_{1}\}$ in an independent manner as\looseness=-1
\begin{equation}
\label{eq:embed_BF1}
w_{I_{j_{1}}}^{(a)} = H_{m_1}(EID_{(I_{j_{1}},I_{s})}||pid||a),
\end{equation}
where $EID_{(I_{j_{1}},I_{s})}$ denotes the edge ID for the link $I_{s} \rightarrow I_{j_{1}}$ generated at $I_{j_{1}}$, and all the other notations are as explained earlier. Subsequently, the Bloom filter $\textbf{BF}_{1}$ is set to one as $\textbf{BF}_{1}[w_{I_{j_{1}}}^{(a)}] = 1$ on the chosen set of indices $\{w_{I_{j_{1}}}^{(a)} |~ 1\leq a \leq k_{1}\}$. After embedding $\textbf{BF}_{1}$, node $I_{j_{1}}$ proceeds to embed its segment ID onto $\textbf{BF}_{2}$ in a manner similar to that explained for node $I_{s}$. Finally, the hop-counter is incremented by one before forwarding the packet to the next node, i.e., node $I_{j_{2}}$. Along similar lines, $\textbf{BF}_{1}$,  $\textbf{BF}_{2}$, and the hop-counter are updated at every node while the packet reaches the RSU through the rest of the path $I_{j_{2}} \rightarrow I_{j_{3}} \rightarrow \ldots \rightarrow I_{j_{h-2}} \rightarrow I_{j_{h-1}}$.\looseness=-1
\vspace{-0.2cm}
\subsection{Recovery Process for Spatial-Provenance} \label{sec:recovery}
Following successful packet reception at the RSU, it endeavours to trace the packet's path and identify the segment IDs of the forwarding nodes. The recovery process happens in two phases; first, the RSU recovers the path traveled by the packet from the received Bloom filter, $\textbf{BF}_1$, and then it recovers the segment IDs from the received Bloom filter $\textbf{BF}_2$.\looseness=-1 

By using $\textbf{BF}_{1}$, the RSU tests all the potential edges in the network, denoted by $\mathcal{E}$, to compile a set of recovered edges denoted as $\hat{\mathcal{E}}$, where $\hat{\mathcal{E}} \subseteq \mathcal{E}$. To recover the edges from $\textbf{BF}_1$, the RSU has all the ingredients such as node IDs $\{I_i\}$, packet ID $pid$, pair-wise keys $\{K_i\}$, edge IDs $\{EID_{(i,j)}\}$. In this context, a recovered edge from node $I_{i}$ to node $I_{j}$ satisfies the condition of $\textbf{BF}_{1}[u_{I_{j}}^{(a)}] = 1$ for all the $k_1$ indices, given by $\{u_{I_{j}}^{(a)} \in [1,m_1]~|~ 1\leq a \leq k_1\}$, which in turn are generated at the RSU using the edge IDs and all other credentials along the similar lines of \eqref{eq:embed_BF1}. Subsequently, the RSU applies a depth-first-search algorithm on the recovered set of edges $\hat{\mathcal{E}}$ to obtain a candidate set of $h$-hop paths from the source node $I_s$ to the RSU, denoted as $\mathcal{C} \triangleq \{\mathbf{c}_1, \mathbf{c}_2,\ldots,\mathbf{c}_{|\mathcal{C}|}\}$, where each $\mathbf{c}_i$ is a $h$-hop path.\looseness=-1 

From each candidate $h$-hop path $\mathbf{c}_i$ consisting of edges $\{(I_s, I_{i_1}), (I_{i_1},I_{i_2} )\ldots (I_{i_{h-1}}, I_{N})\}$, the RSU extracts the set of unique node IDs, denoted as $\mathcal{V}_i = \{I_{s}, I_{i_{1}},I_{i_{2}}, \ldots, I_{i_{h-1}} \}$.  
Using the set of unique node IDs $\mathcal{V}_i$, the RSU obtains the cross-product $\mathcal{Q}_i = \mathcal{V}_i \times \Delta$ to form all possible pairs of node IDs and the segment IDs. Then it checks the presence of each element of $\mathcal{Q}_i$ in $\textbf{BF}_2$ by reconstructing the set of $k_{2}$ indices along the similar lines of \eqref{eq:embed_BF2}. In this context, a pair in $\mathcal{Q}_i$ is referred to as a recovered pair if the values in $\textbf{BF}_{2}$ on all the chosen $k_{2}$ locations are set to one. Finally, for the chosen path $\mathbf{c}_{i}$, a set of potential segment IDs, which is a subset of $\Delta$, are recovered against each node, and these are denoted by $\Delta_{i} = \{\Delta_{I_s}, \Delta_{I_{i_{1}}}, \Delta_{I_{i_{2}}}, \ldots, \Delta_{I_{i_{h-1}}}\}$.\looseness=-1 

Overall, using $\textbf{BF}_{1}$ and $\textbf{BF}_{2}$, the RSU generates the set $\{(\mathbf{c_{i}}, \Delta_{i})\} ~|~ i = 1, 2, \ldots, |\mathcal{C}|]$ against the received packet with ID $pid$. Assuming that the RSU has the knowledge of the routing and communication constraints listed in Section \ref{sec:Routing Constraints}, it further prunes the list of candidate paths and their associated segment IDs from the set $\{(\mathbf{c_{i}}, \Delta_{i})\} ~|~ i = 1, 2, \ldots, |\mathcal{C}|]$ to obtain $\{(\mathbf{c_{i}}, \bar{\Delta}_{i})\} ~|~ i = 1, 2, \ldots, |\mathcal{C}|]$, where $\bar{\Delta}_{i} = \{\bar{\Delta}_{I_s}, \bar{\Delta}_{I_{i_{1}}}, \bar{\Delta}_{I_{i_{2}}}, \ldots, \bar{\Delta}_{I_{i_{h-1}}}\}$ such that $\bar{\Delta}_{I} \subseteq \Delta_{I} ~\forall \, I \in \{I_s, I_{i_1}, \ldots, I_{i_{h-1}}\}$ are obtained after applying the constraints in Section \ref{sec:Routing Constraints}.\looseness=-1

As the packet can travel only one path, ideally the RSU should recover exactly one path, and only one segment ID against every participating node, i.e., $|\mathcal{C}| =1$ and $|\bar{\Delta}_{I}| = 1, ~\forall \, I \in \{I_s, I_{j_1}, \ldots, I_{j_{h-1}}\}$. However, Bloom filters that are known to save space and time in the embedding and query process are also known to result in false positive events, especially if their parameters $m_1,~ m_2,~k_1$ and $k_2$ are not appropriately chosen. In the context of this work, the recovery process of the spatial-provenance results in a false-positive event if the RSU recovers multiple solutions to the path and segment locations satisfying the constraints in Section \ref{sec:Routing Constraints}, i.e., $|\mathcal{C}| > 1$ or $|\mathcal{C}| = 1$ with $|\bar{\Delta}_{I}| > 1,$ for some $I \in \{I_s, I_{j_1}, \ldots, I_{j_{h-1}}\}$. Therefore, in order to recover one path and only one segment ID against every participating node with a very high probability, we need to select the Bloom filter parameters $m_1,~ m_2,~k_1$ and $k_2$ appropriately by analyzing the error events related to our network model. Thus, in the next section, we present a detailed analysis of the false-positive events associated with our protocol, and then address the optimization of the Bloom filter parameters.\looseness=-1
\vspace{-0.25cm}
\subsection{Optimization of Bloom Filter Parameters}  \label{sec:optimization}
Let us consider a scenario wherein the locations of all the nodes are fixed and the packet reaches the RSU through a fixed hop-length $h$ implementing our proposed protocol. Upon applying the spatial-provenance recovery method, let $E_{fp}^{(1,2)}$ represent an event where more than one possible arrangement of nodes and segments, satisfying the constraints in Section \ref{sec:Routing Constraints}, are retrieved from the Bloom filters $\textbf{BF}_1$ and $\textbf{BF}_2$. With such events, the RSU will not be able to resolve the actual path traveled by the packet, as well as the true locations of the forwarding nodes. To ensure high reliability in the spatial-provenance recovery process, the frequency of occurrence of such false positive events, represented by $\Pr{(E_{fp}^{(1,2)})}$, should be minimized when a number of packets flow through the network. Therefore, before deploying the proposed protocol for a given hop-length, we need to choose the parameters $m_1,~ m_2,~k_1$ and $k_2$ for $\textbf{BF}_{1}$ and $\textbf{BF}_{2}$ so as to minimize the expected probability of false positive events, when averaged over all the valid spatial-distribution of the forwarding nodes that adheres to the constraints defined in Section \ref{sec:Routing Constraints}. In light of these considerations, we propose the following problem statement.\looseness=-1  

{\begin{problem_stmt}{}
\label{prob1}
Given $N$, $h$, $\Delta$, $\beta$, $m_1$ and $m_2$, solve:
\begin{IEEEeqnarray}{rCl}
{k_1}^{*}, {k_2}^{*} = \arg \min_{\{k_{1}, k_2\}} \underset{g\in \mathcal{G}_{\beta}}{\mathbb{E}} [\Pr(E_{fp}^{(1,2)})],
\end{IEEEeqnarray}
s.t. $1\leq k_1\leq m_1$, $1\leq k_2\leq m_2$, and $\mathcal{G}_{\beta}$ represents the set of all valid mappings $g(\cdot)$ between the IDs of the forwarding nodes and $\Delta$ satisfying the constraints in Section \ref{sec:Routing Constraints}.
\end{problem_stmt}}
 
The recovery process, detailed in Section \ref{sec:recovery}, involves path restoration from $\textbf{BF}_1$ followed by segment ID recovery from $\textbf{BF}_2$. Let $E_{fp}^{(1)}$ denote the false positive event from $\textbf{BF}_1$ wherein multiple paths of hop-length $h$ are recovered. As a consequence, the RSU declares an event $E_{fp}^{(1, 2)}$ when $E_{fp}^{(1)}$ occurs; otherwise, it proceeds to verify the locations of all the forwarding nodes recovered from $\textbf{BF}_{1}$. When using the path recovered from $\textbf{BF}_{1}$, let $E_{fp}^{(2)}$ denote the event when more than one segment ID sequence satisfying the constraints in Section \ref{sec:Routing Constraints} are recovered from $\textbf{BF}_{2}$. Thus, using $E_{fp}^{(1)}$ and $E_{fp}^{(2)}$, the overall false positive probability $\Pr(E_{fp}^{(1,2)})$ is given by\looseness=-1
\begin{IEEEeqnarray*}{rcl}
\Pr(E_{fp}^{(1,2)}) = \Pr(E_{fp}^{(1)}) + (1-\Pr(E_{fp}^{(1)})) \times \Pr(E_{fp}^{(2)}),
\end{IEEEeqnarray*}
where $\Pr(E_{fp}^{(1)})$ denotes the false positive probability of $\textbf{BF}_1$, and $\Pr(E_{fp}^{(2)})$ denotes the false positive probability of $\textbf{BF}_2$ conditioned on no false positive event from $\textbf{BF}_{1}$. Note that if the size of $\textbf{BF}_1$ is large enough, and the number of Hash functions in it is already optimized,\footnote{Bloom filters for embedding the edges of a path are well studied for large networks in \cite{amogh, suraj}. Therefore, subsequent solutions for optimizing the number the Hash functions $k_{1}$ can be found therein. By virtue of using existing edge embedding methods, our protocol is also applicable for large networks as $\textbf{BF}_{1}$ absorbs the scalability feature.} then $\Pr(E_{fp}^{(1)}) \approx 0$. As a result, we can approximate $\Pr(E_{fp}^{(1,2)}) \approx \Pr(E_{fp}^{(2)})$ under such assumptions.\looseness=-1

Using the above-mentioned relaxation, we propose a variant of Problem \ref{prob1}, which only focuses on optimizing the parameters of $\textbf{BF}_{2}$.\looseness=-1
{\begin{problem_stmt}{}\label{problem2}
Given $N$, $h$, $\Delta$, $m_1$, $m_2$ and $\beta$ solve:
\begin{IEEEeqnarray}{rCl}
{k_2}^{*} = \arg \min_{\{k_2\}} \underset{g\in \mathcal{G}_{\beta}}{\mathbb{E}} [\Pr(E_{fp}^{(2)})],
\end{IEEEeqnarray}
 s.t. $1\leq k_2\leq m_2$, and $\mathcal{G}_{\beta}$ represents the set of all valid mappings $g(\cdot)$ between the node IDs and $\Delta$ satisfying the constraints in Section \ref{sec:Routing Constraints}.\looseness=-1
\end{problem_stmt}}

Problem \ref{prob1} is challenging in general as it involves two Bloom filters. Given that the Bloom filter for embedding edge is already well understood in \cite{amogh},\cite{suraj}, we take the approach of solving Problem \ref{problem2}. Towards solving the above optimization problem, in the next section, we present a method to compute $\underset{g\in \mathcal{G}_{\beta}}{\mathbb{E}} [\Pr(E_{fp}^{(2)})]$, when the spatial distribution of the forwarding nodes is uniformly distributed in the set $\mathcal{G}_{\beta}$.\looseness=-1

\section{Optimization of $\textbf{BF}_2$ Parameters} \label{sec:optimization_of_BF2}
 In this section, we will first derive the expression for the objective function $\underset{g\in \mathcal{G}_{\beta}}{\mathbb{E}} [\Pr(E_{fp}^{(2)})]$, followed by presenting some low-complexity algorithms to calculate the same. After that, we will present a near-optimal solution for Problem \ref{problem2}.\looseness=-1
\subsection{Computing the Expression for $\underset{g\in \mathcal{G}_{\beta}}{\mathbb{E}} [\Pr(E_{fp}^{(2)})]$ }
To solve Problem \ref{problem2}, first we need to calculate the objective function $\underset{g\in \mathcal{G}_{\beta}}{\mathbb{E}} [\Pr(E_{fp}^{(2)})]$. 
Let us assume that for a given spatial-distribution $g \in \mathcal{G}_{\beta}$ of forwarding nodes, a single path of hop length $h$ is already recovered from $\textbf{BF}_1$, and the nodes of this path have been paired with various segment IDs to recover the locations of the nodes. Thus, from the first principle,
\begin{IEEEeqnarray}{rCl}
\label{eqn:obejctive1}
    \underset{g\in \mathcal{G}_{\beta}}{\mathbb{E}} [\Pr(E_{fp}^{(2)})] = \sum\limits_{g \in \mathcal{G}_{\beta}} \Pr(g)\Pr(E_{fp}^{(2)}|g),
\end{IEEEeqnarray}
where $\Pr(E_{fp}^{(2)}|g)$ denotes the false positive probability conditioned that the spatial distribution of the forwarding nodes is $g$ and $\Pr(g)$ denotes the probability of $g$. Assuming the spatial-distribution of the forwarding nodes $g$ to be uniform, i.e., $\Pr(g) = \frac{1}{|\mathcal{G}_{\beta}|}$, we can express our objective function as:
\begin{IEEEeqnarray}{rCl}
\label{eqn:objective2}
    \underset{g\in \mathcal{G}_{\beta}}{\mathbb{E}} [\Pr(E_{fp}^{(2)})] = \frac{1}{|\mathcal{G}_{\beta}|}\sum\limits_{g \in \mathcal{G}_{\beta}} \Pr(E_{fp}^{(2)}|g).
\end{IEEEeqnarray}

Given that $k_{2}$ Hash functions are used by every node, the number of bits lit in $\textbf{BF}_{2}$ through the journey of the packet is a random variable with a minimum value of $1$ and a maximum value of $min(m_2,k_2h)$. Thus, using $\alpha$ to denote the number of bits that have been lit in $\textbf{BF}_2$ by all the forwarding nodes, the false positive probability on the RHS of \eqref{eqn:objective2} is given in the following proposition.\looseness=-1

\begin{proposition} \label{prop:Pr_Efp} The probability of a false positive event for a given spatial-distribution of forwarding nodes, denoted by $\Pr(E_{fp}^{(2)}|g)$, can be written using the Bayes' rule as:
\begin{IEEEeqnarray}{rCl}
\label{Eq:Pr_Efp2}
    \Pr(E_{fp}^{(2)}|g) = \sum\limits_{\alpha=1}^{\min(m_{2},k_{2}h)}\Pr(E_{fp}^{(2)}|\alpha,g)\Pr(\alpha),
\end{IEEEeqnarray}
where $\Pr(E_{fp}^{(2)}|\alpha,g)$ denotes the false positive probability conditioned that $\alpha$ bits are lit in $\textbf{BF}_{2}$, and $\Pr(\alpha)$ denotes the probability of $\alpha$ bits lit in $\textbf{BF}_2$, which is given by \cite{suraj}:
\begin{IEEEeqnarray}{rCl}
\label{eqn:Pr_alpha}
\Pr(\alpha)=\frac{{m_2 \choose \alpha}  \sum\limits_{\gamma=0}^{\alpha}(-1)^{\gamma} {{\alpha}\choose{\gamma}}(\alpha-\gamma)^{k_2 h}}{m_2^{k_2 h}}.
    \end{IEEEeqnarray}
\end{proposition}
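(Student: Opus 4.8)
The plan is to prove \eqref{Eq:Pr_Efp2} in two movements: a marginalization step that introduces the auxiliary variable $\alpha$, followed by an argument that the law of $\alpha$ does not depend on the spatial map $g$, so that the conditional weight $\Pr(\alpha\mid g)$ collapses to the unconditional $\Pr(\alpha)$ of \eqref{eqn:Pr_alpha} quoted from \cite{suraj}.

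First I would let $\alpha$ be the number of distinct bits lit in $\textbf{BF}_2$ after the $h$ forwarding nodes $I_s, I_{j_1},\ldots,I_{j_{h-1}}$ have each embedded their segment ID with $k_2$ Hash functions. Since these $k_2 h$ index writes light at least one and at most $\min(m_2,k_2h)$ positions, $\alpha$ takes values in $\{1,\ldots,\min(m_2,k_2h)\}$. The law of total probability together with the product rule then gives
\begin{IEEEeqnarray*}{rCl}
\Pr(E_{fp}^{(2)}\mid g) &=& \sum_{\alpha=1}^{\min(m_2,k_2h)} \Pr(E_{fp}^{(2)},\alpha\mid g) \\
&=& \sum_{\alpha=1}^{\min(m_2,k_2h)} \Pr(E_{fp}^{(2)}\mid \alpha,g)\,\Pr(\alpha\mid g),
\end{IEEEeqnarray*}
which is already of the form \eqref{Eq:Pr_Efp2} except that the last factor still carries the conditioning on $g$.

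The decisive step is to show $\Pr(\alpha\mid g)=\Pr(\alpha)$, i.e.\ that the occupancy of $\textbf{BF}_2$ is statistically independent of which segments the nodes occupy. Here I would appeal to the embedding rule \eqref{eq:embed_BF2}: each of the $k_2 h$ written indices is the output of $H_{m_2}(\cdot)$ on an input concatenating the node ID, its segment $g(I_i)$, the packet ID, the counter, and the per-node key $K_i$. Because distinct nodes carry distinct IDs and keys, all $k_2 h$ hash inputs are pairwise distinct for \emph{every} admissible $g$, so under the idealized uniform-and-independent hash model the outputs are $k_2 h$ i.i.d.\ uniform draws on $[1,m_2]$ irrespective of the segment assignment. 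Hence $\alpha$ is exactly the number of occupied bins when $k_2 h$ balls are thrown uniformly into $m_2$ bins, a distribution that depends only on $k_2 h$ and $m_2$. Substituting $\Pr(\alpha\mid g)=\Pr(\alpha)$ yields \eqref{Eq:Pr_Efp2}; the closed form \eqref{eqn:Pr_alpha} is then the standard occupancy count (choose the $\alpha$ lit positions in $\binom{m_2}{\alpha}$ ways, count surjections of $k_2 h$ balls onto them by inclusion--exclusion, and normalize by $m_2^{k_2h}$), which we take from \cite{suraj}.

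The main obstacle I anticipate is not the algebra but justifying this independence cleanly, since a reader may worry that because $g(I_i)$ appears inside the hash argument in \eqref{eq:embed_BF2}, the number of lit bits ought to depend on the configuration. The resolution is that $g(I_i)$ merely relabels the hash \emph{inputs}, which are already pairwise distinct across the $k_2 h$ evaluations by virtue of the unique $(I_i,K_i)$ fields; under the uniform-hash idealization the \emph{outputs} therefore form the identical balls-in-bins ensemble for every $g$. Making this modeling assumption explicit is the crux that lets the $g$-dependence drop out and reduces $\Pr(E_{fp}^{(2)}\mid g)$ to the quoted $\Pr(\alpha)$.
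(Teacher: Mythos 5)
Your proposal is correct and follows the same route the paper takes (the paper states the proposition without a written proof, relying on the law of total probability over $\alpha$ and importing the occupancy distribution $\Pr(\alpha)$ from \cite{suraj}). Your explicit justification that $\Pr(\alpha\mid g)=\Pr(\alpha)$ under the uniform-hash idealization fills in a step the paper leaves implicit by writing $\Pr(\alpha)$ unconditioned, but it is the same decomposition, not a different argument.
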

Using \eqref{Eq:Pr_Efp2} in \eqref{eqn:objective2}, the objective function of Problem \ref{problem2} can be represented as:
\begin{small}
\begin{IEEEeqnarray}{rCl}
\label{eqn:objective3}
    \underset{g\in \mathcal{G}_{\beta}}{\mathbb{E}} [\Pr(E_{fp}^{(2)})] = \frac{1}{|\mathcal{G}_{\beta}|}\sum\limits_{g \in \mathcal{G}_{\beta}} \sum\limits_{\alpha=1}^{\min(m_{2},k_{2}h)} \Pr(E_{fp}^{(2)}|\alpha,g) \Pr(\alpha),
\end{IEEEeqnarray}
\begin{IEEEeqnarray}{rCl}
\label{eqn:objective4}
     = \frac{1}{|\mathcal{G}_{\beta}|} \sum\limits_{\alpha=1}^{\min(m_{2},k_{2}h)} \Pr(\alpha)  \sum\limits_{g \in \mathcal{G}_{\beta}}  \Pr(E_{fp}^{(2)}|\alpha,g) ,
\end{IEEEeqnarray}
\begin{IEEEeqnarray}{rCl}
\label{eqn:objective5}
     =  \sum\limits_{\alpha=1}^{\min(m_{2},k_{2}h)}  \Pr(\alpha) \bigg[\frac{1}{|\mathcal{G}_{\beta}|} \sum\limits_{g \in \mathcal{G}_{\beta}}  \Pr(E_{fp}^{(2)}|\alpha,g)\bigg].
\end{IEEEeqnarray}
\end{small}
In the rest of the section, we provide an expression for $\frac{1}{|\mathcal{G}_{\beta}|} \sum\limits_{g \in \mathcal{G}_{\beta}}  \Pr(E_{fp}^{(2)}|\alpha,g)$. This expression has two parts, i.e., the total number of $h$-hop sequences of the spatial-distribution of forwarding nodes, denoted as $|\mathcal{G}_{\beta}|$, and the summation of probability of false-positive events of $\textbf{BF}_2$ for a given distribution $g$ when $\alpha$ bits are lit in $\textbf{BF}_2$. Next, we discuss the preliminaries to calculate the value of $|\mathcal{G}_{\beta}|$.\looseness=-1  

\begin{definition}\label{def:valid_sequecnes}
 To enumerate the false positive events in $\textbf{BF}_2$, we define a valid segment sequence as the set of ordered locations that satisfy the communication constraints, as defined in Section \ref{sec:Routing Constraints}. In simple terms, a valid sequence of segments is a sequence from $\Delta$ such that the indexes of segments are in non-decreasing order with the condition that the difference between consecutive indexes is not more than $\beta$.\looseness=-1 
\end{definition}
For instance, for a hop-length $h = 4$, $\beta =3$ and $|\Delta| = 7$, a valid sequence is $\{A_{1}, A_{2} ,A_{3}, A_{5}, A_{7}\}$, and not $\{A_{1}, A_{2}, A_{3}, A_{3}, A_{7}\}$.
In general, we denote the set of all valid sequences of segments as $\mathcal{P}_{\beta}$. Note that, there is a one-to-one correspondence between the set $\mathcal{G}_{\beta}$ and the set $\mathcal{P}_{\beta}$, where one is the mapping of the node IDs to the segment IDs while the other is the ordered sequence of segment IDs. Therefore, for the communication constraint $\beta$, the value of $|\mathcal{G}_{\beta}| = |\mathcal{P}_{\beta}|$. The following lemma presents the result on the cardinality of $\mathcal{P}_{\beta}$ as a function of $|\Delta|,~\beta$ and $h$.\looseness=-1  

\begin{lemma} \label{lemma:path}
If $\mathcal{P_{\beta}}$ denotes the set of all valid sequences of segments of hop-length $h$ with the communication constraint $\beta$, then we can enumerate the elements of $\mathcal{P_{\beta}}$ for a given hop-length $h$ and the number of division of area $|\Delta|=r$ by constructing a $\beta$-ary tree.\looseness=-1
\end{lemma}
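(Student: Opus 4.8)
The plan is to reduce the counting of $\mathcal{G}_{\beta}$ to an enumeration over a rooted tree whose root-to-vertex paths are in bijection with valid segment sequences. First I would invoke the one-to-one correspondence between $\mathcal{G}_{\beta}$ and $\mathcal{P}_{\beta}$ stated just before the lemma, so that it suffices to enumerate $\mathcal{P}_{\beta}$. I would then fix the combinatorial model suggested by Definition \ref{def:valid_sequecnes}: a valid sequence is a tuple $(a_{0}, a_{1}, \ldots, a_{h})$ of segment indices with $a_{0} = 1$ (the RSU sits in $A_{1}$ and is the common endpoint of every $h$-hop path), $a_{t} \le a_{t+1}$, $a_{t+1} - a_{t} \le \beta$, and $a_{h} \le r$ with $r = |\Delta|$. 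These are exactly the non-decreasing and $\beta$-bounded conditions, together with the index cap imposed by the finite number of segments.

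Next I would construct the tree $T$ level by level along the hop index. Place the root at depth $0$ with label $1$, and for every vertex at depth $t < h$ carrying label $i$ attach one child for each admissible successor index $j$ with $i \le j \le \min(i + \beta, r)$, stopping growth at depth $h$. Thus each internal vertex has at most $\beta + 1$ children (the self-index $j = i$ and the $\beta$ forward steps $j = i+1, \ldots, i+\beta$), truncated so labels never exceed $r$. Under the strict-advance reading suggested by the worked example, where consecutive indices strictly increase, the self-edge $j = i$ is dropped and every internal vertex has exactly $\beta$ children, so $T$ becomes the $\beta$-ary tree named in the lemma.

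The core step is to prove, by induction on the depth $t$, that the map sending each root-to-vertex path to its sequence of labels is a bijection from the depth-$t$ vertices of $T$ onto the valid sequences of length $t + 1$. The base case $t = 0$ is immediate, the unique root corresponding to $(1)$. For the inductive step, a valid sequence of length $t+1$ restricts (by dropping its last entry) to a unique valid sequence of length $t$, whose corresponding depth-$t$ vertex possesses exactly one child with the required final label, because the child-generation rule reproduces verbatim the admissibility conditions $a_{t} \le a_{t+1} \le \min(a_{t} + \beta, r)$; conversely every depth-$(t+1)$ vertex arises in this way. Injectivity and surjectivity follow, and specializing to $t = h$ identifies $|\mathcal{P}_{\beta}|$, and hence $|\mathcal{G}_{\beta}|$ by the correspondence above, with the number of depth-$h$ vertices of $T$, which is read off by a single traversal.

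The main obstacle I anticipate is the boundary truncation near $A_{r}$. Away from the upper edge each vertex carries its full complement of children, so that portion of $T$ contributes a clean $(\beta+1)^{\,h}$ (or $\beta^{\,h}$ in the strict-advance reading) count; but once $i + \beta > r$ the branching drops to $r - i + 1$, and these truncated branches are precisely what obstruct a naive closed form and force the explicit recursive tree enumeration asserted by the lemma. The delicate part is therefore verifying that the bijection survives the boundary, i.e.\ that no valid sequence reaching the last few segments is omitted or double counted, which reduces to checking that the $\min(i+\beta, r)$ cap in the child rule coincides exactly with the index bound $a_{t+1} \le r$ in the definition of a valid sequence.
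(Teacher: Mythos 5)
Your construction is essentially the paper's: a tree rooted at label $1$ in which a vertex labelled $X$ has children $X, X+1, \ldots, X+\beta$ (so $\beta+1$ children), grown to depth $h$ and pruned of branches whose labels exceed $r$, with root-to-leaf paths enumerating $\mathcal{P}_{\beta}$; your explicit induction establishing the bijection is a welcome addition that the paper leaves implicit. One small correction: your ``strict-advance reading'' is a misreading of the worked example --- the sequence $\{A_1,A_2,A_3,A_3,A_7\}$ is rejected there because $7-3>\beta$, not because of the repeated index, so the self-child $j=i$ must be retained exactly as in your primary ($\beta+1$)-children construction.
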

\begin{proof}
We utilize a $\beta$-ary tree to enumerate the correct segment sequences while adhering to the communication constraints outlined in Section \ref{sec:Routing Constraints}, and considering the condition that the maximum value of $|\Delta|$ is $r$.
A $\beta$-ary tree is a hierarchical data structure used for generating non-decreasing sequences. Let the value of the \emph{main root node} of $\beta$-ary tree be one. Every root node with value $X \in [r]$ has $\beta+1$ leaf nodes, where the value of leaf nodes varies from $X+0$ to $X+\beta$. This structure ensures a systematic and incremental progression of values, facilitating organized data representation. Specifically, for a given root node $X$, the associated leaf nodes will have values $\{X+0, X+1,\ldots,X+\beta\}$. For instance, with $\beta=2$, the main root node $X=1$ will have leaf nodes with values $\{1,2,3\}$, and a subsequent root node with $X=2$ will have leaf nodes with values $\{2,3,4\}$. Since we are forming valid sequences for an $h$-hop path, the maximum depth of this $\beta$-ary tree will be $h$. From this tree, we select only those stems whose leaf nodes have values at most $r$. These stems are then added to the set $\mathcal{P_{\beta}}$. This completes the proof.\looseness=-1
\end{proof}

\begin{corollary} \label{lemma:pathbeta}
With $\mathcal{P}_{\beta}$ obtained from Lemma \ref{lemma:path}, we get $|\mathcal{P_{\beta}}|$ trivially.\looseness=-1 
\end{corollary}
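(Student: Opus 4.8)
The plan is to read the cardinality off directly from the constructive enumeration that Lemma~\ref{lemma:path} already supplies, so essentially no new machinery is needed. First I would invoke the one-to-one correspondence recorded just before the lemma, which lets me work entirely with $\mathcal{P}_\beta$ rather than $\mathcal{G}_\beta$. Lemma~\ref{lemma:path} exhibits $\mathcal{P}_\beta$ as the set of root-to-leaf stems of the depth-$h$ $\beta$-ary tree, after discarding every stem that contains a node of value exceeding $r$; since the main root carries value $1$, each admissible stem is the segment sequence $1 = s_0 \le s_1 \le \dots \le s_h \le r$ with $s_{i+1}-s_i \le \beta$. Consequently $|\mathcal{P}_\beta|$ is just the number of retained stems, and obtaining it amounts to tallying the objects the lemma has already produced.

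To make the count mechanical rather than relying on literally drawing the tree, I would set up a level-wise recursion. Let $f(x,\ell)$ denote the number of ways to complete a partial stem that currently sits at value $x$ with $\ell$ levels still to descend, all future values constrained to lie in $[x,r]$. The base case is $f(x,0)=1$ for every $x \le r$, and the recursion is $f(x,\ell) = \sum_{d=0}^{\beta} f(x+d,\ell-1)$, with summands for which $x+d > r$ dropped; the desired quantity is then $|\mathcal{P}_\beta| = f(1,h)$, evaluable in a single bottom-up pass over the $O(rh)$ distinct states. Equivalently and more transparently, I would encode each stem by its increment vector $(d_1,\dots,d_h) \in \{0,1,\dots,\beta\}^h$, noting that a stem is admissible exactly when its terminal value $1+\sum_{i=1}^h d_i \le r$, i.e. $\sum_{i=1}^h d_i \le r-1$; hence $|\mathcal{P}_\beta|$ equals the number of such bounded integer vectors, which I would write out by a standard inclusion--exclusion over the box $\{0,\dots,\beta\}^h$.

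The only step carrying any content is the truncation at $r$. When $r \ge 1+h\beta$ the upper bound never binds, every increment vector is admissible, and the count collapses to the clean form $(\beta+1)^h$; for smaller $r$ one is instead left with a bounded-composition count that has no equally compact closed form, which is precisely why the cardinality is said to be available only once the tree (and hence $r$) is fixed. I therefore expect no genuine obstacle: both descriptions are immediate consequences of counting the stems Lemma~\ref{lemma:path} enumerates, which is exactly the sense in which $|\mathcal{P}_\beta|$ follows trivially.
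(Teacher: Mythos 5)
Your proposal is correct and matches the paper's (implicit) argument: the corollary is stated without proof precisely because, once Lemma~\ref{lemma:path} has explicitly enumerated the stems of the $\beta$-ary tree, $|\mathcal{P}_\beta|$ is obtained by simply counting them. Your additional level-wise recursion $f(x,\ell)$, the increment-vector encoding, and the observation that the count collapses to $(\beta+1)^h$ when $r \ge 1+h\beta$ are all correct and give a more efficient way to compute the same quantity, but they go beyond what the paper needs or claims here.
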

\begin{figure}
    \centering
    \includegraphics[scale=0.4]{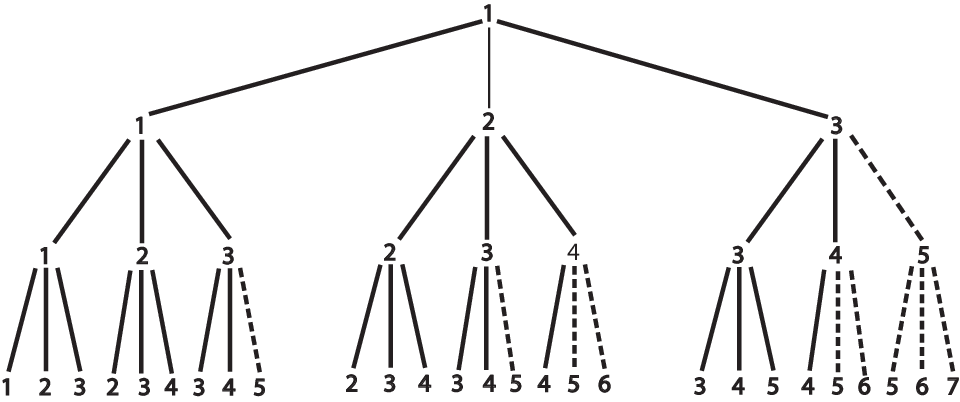}
    \caption{ A representation of the $\beta$-ary tree for the network configuration of $\beta=2,~|\Delta|=r=4$ and $h=3$. The solid line shows the valid segment sequences, and the dotted lines show the invalid segment sequences. Stems that are in solid lines from the root to the leaf nodes are added to the set $\mathcal{P_{\beta}}$.\looseness=-1
    }
    \label{fig:beta-ary tree}
    \vspace{-0.4cm}
\end{figure}

\begin{table*}[h!] 
\caption{ Illustrating the computation of false positive events as a function of extra recovery.}
\centering
 \begin{tabular}{|c | c |  c | c| c|} 
 \hline
 Valid Paths  & Extra 1 Recovery  & Extra 2 Recovery  & \ldots & Extra $(|\Delta|)-1)h$ Recovery\\
 ($\mathcal{P}_{\beta}$) & ($\mathcal{R} =1$) & ($\mathcal{R} =2$) & & ($\mathcal{R} =(|\Delta|-1)h$)
 \\[0.5ex] 
 \hline\hline
 1111\ldots$1_{h}$ & $C_{1,1}$ & $C_{1,2}$ & \ldots & $C_{1,(|\Delta| -1)h}$ \\ 
 1111\ldots$2_{h}$ & $C_{2,1}$ & $C_{2,2}$ & \ldots &  $C_{2,(|\Delta| -1)h}$ \\
 . & . & . & \ldots & . \\
 . & . & . & \ldots & . \\
. & . & . & \ldots  & . \\
 123\ldots $\Delta$ \ldots$\Delta_{h}$ & $C_{|\mathcal{P}_{\beta}|,1}$ & $C_{|\mathcal{P}_{\beta}|,2}$ & \ldots &  $C_{|\mathcal{P}_{\beta}|,(|\Delta| -1)h}$\\ [1ex] 
 \hline\hline
 Total & $C_1 = \sum_{x=1}^{|\mathcal{P}_{\beta}|}C_{x,1} $ &  $C_2 = \sum_{x=1}^{|\mathcal{P}_{\beta}|}C_{x,2} $ & \ldots &  $C_{(|\Delta| -1).h}= \sum_{x=1}^{|\mathcal{P}_{\beta}|}C_{x,(|\Delta| -1)h}$\\
 
 
 \hline
 \end{tabular}
 \label{T:falsepositiveCalculation}
\end{table*}

 Now, we proceed towards the calculation of the expression $\Pr(E_{fp}^{(2)}|\alpha,g)$ for which we formally define the collision event in Bloom filter.\looseness=-1 

\begin{definition}\label{def:collision} A collision event in a Bloom filter is defined as an event when we verify an object from the Bloom filter that was not initially embedded in it.\looseness=-1
\end{definition}

Given a Bloom filter of size $m$ bits with $\alpha$ bits already lit in it, the probability of a collision event is given by $(\alpha/m)^{k}$, where $k$ is the number of Hash functions used to embed every object. With respect to $\textbf{BF}_{2}$, when we verify the segment ID of a node, we denote the probability of a collision event as $p_1$, and the probability of non-collision events as $p_2$, which are given as\looseness=-1
\begin{IEEEeqnarray*}{rcl}
   p_1 = (\alpha/m_2)^{k_2}, \quad  p_2 = 1 - p_1.
\end{IEEEeqnarray*}
Note that the above-discussed collision events are crucial in enumerating the false positive events in $\textbf{BF}_{2}$. For instance, with hop-length $h = 4$, $\beta =3$ and $|\Delta| = 7$, a valid sequence $\{A_{1}, A_{2}, A_{3}, A_{5}, A_{7}\}$ will result in false positive event if there is collision of the index values of the event $\{A_6\}$ associated with the $1^{st}$ node or the $2^{nd}$ node in the multipath.\looseness=-1

Based on Section \ref{sec:Bloom Filters}, during recovery of the node-segment pairs from the Bloom filter $\textbf{BF}_2$, we check all possible combinations of node-segment pairs in the Bloom filter. Therefore, the RSU verifies $h|\Delta|$ pairs in $\textbf{BF}_2$, out of which only $h$ were embedded. We define the set of remaining pairs as false pairs, as formally defined below.\looseness=-1
\begin{definition} \label{def:setF}
For a given valid segment sequence in $\mathcal{P}_{\beta}$, the set of node-segment pairs that were not originally embedded in $\textbf{BF}_2$, however, is checked for membership in $\textbf{BF}_2$ at the RSU, is called the set of false pairs, denoted by $\mathcal{F}$, where $|\mathcal{F}| = h(|\Delta|-1)$.\looseness=-1 
\end{definition}
We want to emphasise that the set $\mathcal{F}$ will be different for each valid sequence in $\mathcal{P}_{\beta}$. However, $|\mathcal{F}|$ is the same for every set $\mathcal{F}$.\looseness=-1
\begin{definition} \label{def:setR}
For a given valid segment sequence in $\mathcal{P}_{\beta}$, the set of node-segment pairs recovered from $\textbf{BF}_2$, not on the original segment sequence, is called extra recoveries, denoted by $\mathcal{R}$, where $\mathcal{R} \subseteq \mathcal{F}$.\looseness=-1
\end{definition}
Note that, although $|\mathcal{F}|$ is same for every sequence in $\mathcal{P}_{\beta}$, set $\mathcal{R}$ is a random subset of $\mathcal{F}$, which depends on the collision event defined in Definition \ref{def:collision}.\looseness=-1
\begin{example}
Consider a segment sequence for $\beta =2$, $h=3$ and $r=7$ as follows: $[(I_{1},A_{1})\rightarrow (I_{2},A_{2})\rightarrow (I_{3},A_4)\rightarrow (I_{4},A_6) ]$. For this sequence, $\mathcal{F} = \{I_1,I_2,I_3,I_4\}\times \Delta \backslash \{(I_{1},A_{1}),(I_{2},A_{2}),(I_{3},A_4), (I_{4},A_6)\}$. If $\mathcal{R}=\{(I_4, A_7)\}$ or $\mathcal{R}=\{(I_4, A_2)\}$, this will not result in a false positive event since substituting the extra recovery in the original path would violate the node communication constraint in Section \ref{sec:Routing Constraints}. However, if $\mathcal{R} = \{(I_4,A_5)\}$ or $\mathcal{R} = \{(I_4,A_4)\}$, then replacing this with $(I_4, A_6)$ will lead to a false positive event.
\end{example}

For a given segment sequence in $\mathcal{P}_{\beta}$, a false positive event occurs when either one of the other sequences in $\mathcal{P}_{\beta}$ is recovered from the Bloom filter. Since other sequences in $\mathcal{P}_{\beta}$ are recovered through extra recoveries, we need to relate extra recoveries with the false positive events. During the recovery process from $\textbf{BF}_{2}$, the number of extra recoveries is a random variable. Therefore, we need to consider all possible cases of extra recoveries such that $|\mathcal{R}| \leq |\mathcal{F}|$. Assuming $\mathcal{P}_{\beta} = \{\mathbf{p}_{1}, \mathbf{p}_{2}, \ldots, \mathbf{p}_{|\mathcal{P}_{\beta}|}\}$, suppose that $\mathbf{p}_{x}$ for some $x \in [1, |\mathcal{P}_{\beta}|]$, is the true sequence of segments. If the Bloom filter is lit based on this sequence $\mathbf{p}_{x}$ from the $h$ nodes, let $C_{xj}$ represent the number of sequences in $\mathcal{P}_{\beta}\backslash \{\mathbf{p}_{x}\}$ which can be recovered from the Bloom filter when $j = |\mathcal{R}|$ extra recoveries are lit in the Bloom filter. If $C_{xj}$ can be computed for all $j \in [1, |\mathcal{F}|]$, then the false positive probability for the sequence $\mathbf{p}_{x}$ can be computed. Let the sum of all the number of sequences across all the paths in $\mathcal{P}_{\beta}$ for a given number of extra recovery $|\mathcal{R}|=j$ is represented as $C_{j}$, where $C_{j} \triangleq \sum_{x = 1}^{|\mathcal{P}_{\beta}|}C_{xj}$. Then, we use the following theorem to define the probability of the event of a false positive on the parameter $C_{x,j}$.\looseness=-1
\begin{theorem}
Given the spatial-distribution of the forwarding nodes $g$, where the packet traversed through $h$ hops, the closed-form expression for the probability of the false positive event of $\textbf{BF}_2$ is:\looseness=-1

\begin{footnotesize}
\begin{IEEEeqnarray*}{rcl}
\label{overall_fp2_theorem}
\frac{1}{|\mathcal{G}_{\beta}|}\sum\limits_{g \in \mathcal{G}_{\beta}}\Pr(E_{fp}^{(2)}|\alpha,g) = \frac{1}{|\mathcal{G}_{\beta}|}\sum_{j=1}^{h(|\Delta|-1)} p_1^j\times p_2^{h(|\Delta|-1)-j}\times \Big(\sum_{x = 1}^{|\mathcal{G}_{\beta}|}C_{xj}\Big),
\end{IEEEeqnarray*}
\end{footnotesize}

\noindent\noindent where $p_1 = (a/m_2)^{k_2}$ and $p_2=1-p_1$, and $|\mathcal{G}_{\beta}|$ is taken from Lemma \ref{lemma:path}.
\end{theorem}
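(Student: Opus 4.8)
The plan is to first collapse the average over spatial distributions into an average over true segment sequences, and then, for one fixed true sequence, obtain the conditional false-positive probability by a configuration-counting argument on the set of false pairs. First I would invoke the one-to-one correspondence between $\mathcal{G}_{\beta}$ and $\mathcal{P}_{\beta}$ noted just before Lemma \ref{lemma:path}: every valid mapping $g$ corresponds to a unique valid segment sequence $\mathbf{p}_{x} \in \mathcal{P}_{\beta}$ and vice versa. Hence $\frac{1}{|\mathcal{G}_{\beta}|}\sum_{g\in\mathcal{G}_{\beta}}\Pr(E_{fp}^{(2)}|\alpha,g)=\frac{1}{|\mathcal{G}_{\beta}|}\sum_{x=1}^{|\mathcal{P}_{\beta}|}\Pr(E_{fp}^{(2)}|\alpha,\mathbf{p}_{x})$, so it suffices to evaluate the summand for a single true sequence and re-sum at the end.

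Next I would fix $\mathbf{p}_{x}$ as the embedded sequence. Its $h$ node-segment pairs are written into $\textbf{BF}_{2}$, so by the no-false-negative property they are always recovered. During recovery the RSU additionally queries the $|\mathcal{F}|=h(|\Delta|-1)$ false pairs of Definition \ref{def:setF}; with exactly $\alpha$ bits lit, each false pair is independently recovered (a collision in the sense of Definition \ref{def:collision}) with probability $p_1=(\alpha/m_2)^{k_2}$ and rejected with probability $p_2=1-p_1$. I would therefore model the extra-recovery set $\mathcal{R}\subseteq\mathcal{F}$ of Definition \ref{def:setR} as a random subset in which each false pair appears independently with probability $p_1$, giving $\Pr(\mathcal{R}=R)=p_1^{|R|}p_2^{|\mathcal{F}|-|R|}$ for every fixed $R$.

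The core step is to re-express $E_{fp}^{(2)}$ purely through $\mathcal{R}$: a false positive occurs exactly when the recovered pairs, namely the embedded pairs of $\mathbf{p}_{x}$ together with $\mathcal{R}$, contain some other valid sequence $\mathbf{p}_{y}\in\mathcal{P}_{\beta}\setminus\{\mathbf{p}_{x}\}$, i.e. when $\mathcal{R}$ supplies every pair in which $\mathbf{p}_{y}$ differs from $\mathbf{p}_{x}$ while still obeying the constraints of Section \ref{sec:Routing Constraints} (as with the discarded recoveries $(I_4,A_7)$ and $(I_4,A_2)$ in the running example, versus the genuine ones $(I_4,A_5)$ and $(I_4,A_4)$). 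I would then partition the event by $j=|\mathcal{R}|$: letting $C_{xj}$ denote the number of size-$j$ extra-recovery configurations that complete at least one alternative valid sequence, which is precisely the row-$x$, column-$j$ entry aggregated in Table \ref{T:falsepositiveCalculation}, and using that configurations of a common size are equiprobable and mutually exclusive, I obtain $\Pr(E_{fp}^{(2)}|\alpha,\mathbf{p}_{x})=\sum_{j=1}^{h(|\Delta|-1)}C_{xj}\,p_1^{j}p_2^{h(|\Delta|-1)-j}$. Summing over $x$, dividing by $|\mathcal{G}_{\beta}|$, and interchanging the two finite sums so as to pull the weight $p_1^{j}p_2^{h(|\Delta|-1)-j}$ out front and collect $\sum_{x}C_{xj}$ then delivers the stated closed form.

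The main obstacle I anticipate is the bookkeeping behind $C_{xj}$. It is not enough to sum per-sequence collision probabilities $p_1^{\,d_{xy}}$ (with $d_{xy}$ the number of positions in which $\mathbf{p}_{y}$ differs from $\mathbf{p}_{x}$) over alternative sequences, because a single configuration $\mathcal{R}$ may simultaneously complete several alternative valid sequences, so a naive union over $\mathbf{p}_{y}$ would over-count, and because only extra recoveries that yield a sequence obeying the non-decreasing, $\beta$-bounded constraint count as genuine false positives. The delicate part is arguing that $C_{xj}$ must count admissible \emph{configurations}, as in the table, rather than sequences, and that the constraint filtering is already absorbed into which configurations are admissible; a secondary point is making explicit the standard Bloom-filter independence approximation used when treating the per-pair collisions as independent Bernoulli$(p_1)$ trials given $\alpha$ lit bits.
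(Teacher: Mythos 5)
Your proposal is correct and follows the same skeleton as the paper's proof: both reduce the average over $\mathcal{G}_{\beta}$ to an average over true sequences in $\mathcal{P}_{\beta}$ via the stated bijection, model each false pair in $\mathcal{F}$ as an independent Bernoulli$(p_1)$ collision with $p_1=(\alpha/m_2)^{k_2}$, partition by $j=|\mathcal{R}|$, attach the weight $p_1^{j}p_2^{h(|\Delta|-1)-j}C_{xj}$, and re-sum over $x$. The one genuine difference is the semantics of $C_{xj}$ and the disjointness argument. The paper defines $C_{xj}$ as the number of \emph{alternative sequences} in $\mathcal{P}_{\beta}\setminus\{\mathbf{p}_x\}$ recoverable under $j$ extra recoveries, and justifies summing their probabilities by asserting that the recovery events of the extra paths are ``independent and mutually exclusive'' --- an assertion that cannot hold simultaneously for nontrivial events and that, as you anticipate in your final paragraph, over-counts whenever a single realization of $\mathcal{R}$ completes several alternatives. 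You instead let $C_{xj}$ count \emph{configurations}, i.e., size-$j$ subsets $R\subseteq\mathcal{F}$ whose addition completes at least one alternative valid sequence; since distinct realizations of the random set $\mathcal{R}$ genuinely are mutually exclusive, your version of the displayed identity is exact under the Bernoulli collision model. The two readings coincide for $j=1$ (which is the only case the paper computes exactly, via Algorithm \ref{al:algo2}; for $j\geq 2$ the paper only lower-bounds $C_{xj}$), so both routes land on the same formula, but your configuration-counting justification is the cleaner one and repairs the paper's mutual-exclusivity step.
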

\begin{proof}
When a packet traverses through a $h$-hop path, then there will be embedding of $h$ segment IDs in $\textbf{BF}_2$. However, due to the properties of the Bloom filter, assume that there are $j$ extra recoveries from $\textbf{BF}_2$ as defined in Definition \ref{def:setR}. The probability of the recovering $j$ elements out of the set $\mathcal{F}$ and not recovering the $(|\mathcal{F}|-j)$ elements is $p_{1}^{j}.p_{2}^{|\mathcal{F}|-j}$.
Due to $j$ extra recoveries, a given valid segment sequence $\mathbf{p}_x, ~x \in [1, |\mathcal{P}_{\beta}|]$, may result into one or more path such as $\mathbf{p}_{x_1}, \mathbf{p}_{x_2}, \ldots,\mathbf{p}_{x_{C_{x_j}}}$, i.e., $E_{fp}^{(2)}|\alpha,\mathbf{p}_x=\mathbf{p}_{x_1} \bigcup \mathbf{p}_{x_2}\bigcup \ldots\bigcup\mathbf{p}_{x_{C_{x_j}}}$. However, the occurrence of all the extra paths is an independent and mutually exclusive event. Therefore, $\Pr(E_{fp}^{(2)}|\alpha,\mathbf{p}_x)=\Pr(\mathbf{p}_{x_1}) + \Pr(\mathbf{p}_{x_2})+\ldots+\Pr(\mathbf{p}_{x_{C_{x_j}}})$.
Using Definition \ref{def:collision}, Definition \ref{def:setF} and Definition \ref{def:setR}, the probability of false positive event can be represented as

\begin{small}\begin{IEEEeqnarray*}{rcl}
\Pr(E_{fp}^{(2)}|\alpha, g)= \Pr(E_{fp}^{(2)}|\alpha, \mathbf{p}_{x}) = \sum_{j=1}^{h(|\Delta|-1)} p_1^j\times p_2^{h(|\Delta|-1)-j}\times C_{xj}.
\end{IEEEeqnarray*}\end{small}

Thus, the false positive probability averaged over all possible true segment sequences can be computed as
\begin{IEEEeqnarray*}{rcl}
\frac{1}{|\mathcal{G}_{\beta}|}\sum\limits_{g \in \mathcal{G}_{\beta}}\Pr(E_{fp}^{(2)}|\alpha,g) = \frac{1}{|\mathcal{P}_{\beta}|} \sum_{x = 1}^{|\mathcal{P}_{\beta}|}\Pr(E_{fp}|\alpha, \mathbf{p}_{x}),
\end{IEEEeqnarray*}
\begin{IEEEeqnarray}{rcl}
\label{overall_fp}
 = \frac{1}{|\mathcal{P}_{\beta}|}\sum_{j=1}^{h(|\Delta|-1)} p_1^j\times p_2^{h(|\Delta|-1)-j}\times (\sum_{x = 1}^{|\mathcal{P}_{\beta}|}C_{xj}),
\end{IEEEeqnarray}
where uniform distribution is assumed on $\mathbf{p}_{x}$. We can express the above equation with $\mathcal{G}_{\beta}$ as 
\begin{IEEEeqnarray}{rcl}
\label{overall_fp2}
 = \frac{1}{|\mathcal{G}_{\beta}|}\sum_{j=1}^{h(|\Delta|-1)} p_1^j\times p_2^{h(|\Delta|-1)-j}\times (\sum_{x = 1}^{|\mathcal{G}_{\beta}|}C_{xj}).
\end{IEEEeqnarray}

Using the above expression, given the parameters of Bloom filter, the expression for $\Pr(E_{fp}^{(2)}|\alpha,g)$ can be computed provided $C_{j} \triangleq \sum_{x = 1}^{|\mathcal{P}_{\beta}|}C_{xj}$ values are taken from Table \ref{T:falsepositiveCalculation}. Next, we present an example of false positive event followed by a method to fill the values in Table \ref{T:falsepositiveCalculation}.\looseness=-1 
\end{proof}

One method to calculate the value of $C_{x,j}$ in Table \ref{T:falsepositiveCalculation} is an exhaustive search approach. For exhaustive search, suppose for a given $h$-hop  sequence $\mathbf{p}_x,x\in[1,|\mathcal{P}_{\beta}|]$, we have $|\mathcal{R}|=1$ extra recovery. 
To calculate the number $C_{x,1}$, we have to do a comparison  $\binom{h}{1} |\mathcal{F}|$ times with complexity order $\mathcal{O}(h(r-1))$ to verify the communication constraint. Similarly, for $j$ extra recoveries, we have to do a comparison $\binom{h}{j}|\mathcal{F}|^{j}$ times. Therefore, if we sum all the possible scenarios of extra recovery for a given sequence $\mathbf{p}_x$, we have $\binom{h}{1} |\mathcal{F}| + \binom{h}{2} |\mathcal{F}|^{2} + \binom{h}{3} |\mathcal{F}|^{3}+\ldots+\binom{h}{h} |\mathcal{F}|^{h} = (1+h(|\Delta|-1))^{h}-1\geq 2^h$. Discounting the complexity of the DFS algorithm to verify a valid sequence, the complexity of the discussed exhaustive search is exponential with $\mathcal{O}((1+h(|\Delta|-1))^{h})$.\looseness=-1 

To further reduce the complexity, we can also apply the communication constraint $\beta$ in the exhaustive search discussed above. Therefore, any position in $h$-hop sequence can be replaced by at most  $\beta$ values. Therefore the complexity of calculating $C_{x,1}$ is $\mathcal{O}(\beta h) < \mathcal{O}(h (r-1))$. Consequently, the complexity of the exhaustive search to find all the $C_{x,j}$ values for a row in Table \ref{T:falsepositiveCalculation} with $\beta$ constraint is $\mathcal{O}((1+\beta))^{h})$, which is still exponential in $h$. Due to the above discussion, in the next section, we will present a method to populate the Table \ref{T:falsepositiveCalculation} with low complexity.\looseness=-1

\subsection{A Low-Complexity Method to Compute $\underset{g\in \mathcal{G}_{\beta}}{\mathbb{E}} [\Pr(E_{fp}^{(2)})]$}
\label{sec:Low_complexity_C_j_any_beta}
To calculate the value of $C_{x,1}$, we propose a low-complexity algorithm with the help of the following proposition. 
\begin{proposition}\label{prop:value_of_C_x_1}
Given there is only $|\mathcal{R}|=1$ extra recovery for a given valid segment sequence $\mathbf{p}_x,~x\in[1,|\mathcal{P}_{\beta}|]$, we can obtain the value of $C_{x,1}$ using Algorithm \ref{al:algo2} for any $\beta$.
\end{proposition}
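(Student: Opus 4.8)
The plan is to show that Algorithm~\ref{al:algo2} counts, in a single linear pass over $\mathbf{p}_x$, exactly those valid sequences in $\mathcal{P}_\beta\setminus\{\mathbf{p}_x\}$ that become recoverable when precisely one false pair is lit. The starting observation is structural: since the embedded pairs already cover every node--segment pair of $\mathbf{p}_x$, any \emph{other} sequence recoverable with $|\mathcal{R}|=1$ must reuse $h-1$ of the embedded pairs and borrow its single remaining pair from the one extra recovery in $\mathcal{R}$ (Definition~\ref{def:setR}). Consequently such a sequence agrees with $\mathbf{p}_x$ in all but one position and differs from it there by using the lit false pair. Writing the numeric segment indices of $\mathbf{p}_x$ as $s_1\le s_2\le\cdots\le s_h$, I would therefore set up a bijection between the sequences counted by $C_{x,1}$ and the \emph{valid single-position substitutions} of $\mathbf{p}_x$, i.e.\ the pairs $(i,t)$ with $t\neq s_i$ such that replacing $s_i$ by $t$ leaves the sequence in $\mathcal{P}_\beta$.

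The second step reduces the global validity check to a purely local one. Because only position $i$ is altered, every non-decreasing and $\beta$-gap constraint of $\mathbf{p}_x$ not incident to $i$ is inherited unchanged; only the gaps $(i-1,i)$ and $(i,i+1)$ and the range $t\in[1,r]$ can be violated. Hence the admissible replacement values at an interior position form the integer interval $\max(s_{i-1},\,s_{i+1}-\beta)\le t\le \min(s_{i+1},\,s_{i-1}+\beta)$, while at the two endpoints one neighbour is replaced by the boundary of $[1,r]$, giving $\max(1,\,s_2-\beta)\le t\le s_2$ at the first position and $s_{h-1}\le t\le \min(r,\,s_{h-1}+\beta)$ at the last. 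The number of admissible $t$ at position $i$ is the length of this interval minus one, the subtraction removing the original value $s_i$, which lies in the interval precisely because $\mathbf{p}_x$ is itself valid. Algorithm~\ref{al:algo2} evaluates exactly these per-position counts and returns their sum; the example worked out above (where, for the last node, the interval $\{A_4,A_5,A_6\}$ minus $A_6$ yields the two false positives $A_4,A_5$) is the instantiation of this formula.

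Finally I would argue correctness of the summation and bound the cost. Distinct substitution pairs $(i,t)$ yield distinct sequences---they differ from $\mathbf{p}_x$ in the single position $i$ with the single value $t$---so no sequence is counted twice and $C_{x,1}=\sum_{i=1}^{h}(\text{admissible values at }i)$ is exact. Each per-position count is computed in constant time from its two neighbours, so the whole procedure is $O(h)$, an exponential improvement over the exhaustive $\mathcal{O}((1+\beta)^{h})$ search discussed earlier. I expect the only delicate points to be the rigorous justification of the local-constraint reduction (confirming that no constraint not incident to $i$ can be broken) and the careful treatment of the two endpoint intervals together with the clamping to $[1,r]$; once these are settled, the equality between Algorithm~\ref{al:algo2}'s output and $C_{x,1}$ follows directly.
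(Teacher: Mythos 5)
Your proposal is correct and follows essentially the same route as the paper: it justifies the same per-position, window-of-three counting that Algorithm~\ref{al:algo2} performs, and your interval formulas reproduce exactly the algorithm's two branches ($d$ when $0\le d<\beta$, $2\beta-d$ when $\beta\le d\le 2\beta$) as well as the paper's worked example for $\{A_1,A_3,A_5\}$. In fact your write-up is more rigorous than the paper's own proof, which merely describes the algorithm's operation; your bijection with single-position substitutions and the reduction of the global validity check to the two incident constraints are precisely the justifications the paper leaves implicit.
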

\begin{proof}
We propose Algorithm \ref{al:algo2} to calculate $C_{x,1}$. Algorithm \ref{al:algo2}, simply takes the difference of extreme points in a window of size $t=3$, and keeps on adding this difference for the path of hop-length $h$. The time-complexity of Algorithm \ref{al:algo2} is $\mathcal{O}(h)$, as there will be $h$ additions.\looseness=-1

As explained above, we will calculate the false positive paths possible for each path $\mathbf{p}_x,~x\in [1,|\mathcal{P}_{\beta}|]$ using Algorithm \ref{al:algo2}. After this, we will obtain $C_1$ by summing it over all the paths, which is given by $C_1 = \sum_{x=1}^{|\mathcal{P}_{\beta}|}C_{x,1}$.\looseness=-1
\end{proof}
Further, we will explain the method of computing $C_j$ for $j \geq 2$.\looseness=-1
\begin{algorithm}
\caption{Algorithm to count the values of $C_{x,1}$}
    \label{al:algo2}
\begin{algorithmic}[1]
\Require 
\Statex \textbf{1)} Communication constraint $\beta$
\Statex  \textbf{2)} Hop length of given path $h$
\Statex \textbf{3)} A set of segment IDs $\Delta$
\Statex \textbf{4)} A path $\mathbf{p}_x$, from the set of all non-decreasing paths $\mathcal{P}_{\beta}$
\Ensure $C_{x,1}$

\State Initialize $C_{x,1} \Leftarrow 0$
\State Initialize the variable $d$, which stores the difference between two numbers, $d\Leftarrow 0$

\For{$i$ from $0$ to $h-1$}
    \State $d \leftarrow \mathbf{p}_x[i+2]-\mathbf{p}_x[i]$\;
    
     \If {$0\leq d < \beta$}
    \State $C_{x,1} \leftarrow C_{x,1} + d$\;
    
    \ElsIf {$\beta \leq d \leq 2\beta$}
    \State $C_{x,1} \leftarrow C_{x,1} + 2\beta -d$\;
    
    \EndIf
    \State $C_{x,1} \leftarrow C_{x,1} -\min(|\Delta| - \mathbf{p}_x[h],\beta)$\;
\EndFor    
\end{algorithmic}
\end{algorithm}



\begin{proposition}\label{prop:value extension any beta}
Given there are $|\mathcal{R}|=j, j\geq 2$ extra recoveries for a given valid segment sequence $\mathbf{p}_x,~x\in[1,|\mathcal{P}_{\beta}|]$, and we have the value of $C_{x,1}$ obtained from Algorithm \ref{al:algo2}, the value of  $C_{x,j}$, for $j\geq 2$, is given by: 
\begin{IEEEeqnarray}{rCl}
\label{eq:cxj_beta_expression}
C_{x,j} \geq \sum_{l=1}^{C_{x,1}}\binom{h(|\Delta|-1) - l}{j-1}.
\end{IEEEeqnarray}
\end{proposition}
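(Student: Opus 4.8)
The plan is to prove the inequality by a direct combinatorial counting argument on subsets of the false-pair set $\mathcal{F}$, recalling from Definition \ref{def:setF} that $|\mathcal{F}| = h(|\Delta|-1)$. First I would reinterpret $C_{x,1}$ combinatorially: by Proposition \ref{prop:value_of_C_x_1} together with Definition \ref{def:setR}, $C_{x,1}$ equals the number of pairs $f \in \mathcal{F}$ whose individual substitution into the true sequence $\mathbf{p}_x$ yields another valid sequence in $\mathcal{P}_\beta$. I would call these $C_{x,1}$ pairs the \emph{single-substitution} pairs; each one, when lit as an extra recovery, recovers exactly one alternative valid sequence, namely the one differing from $\mathbf{p}_x$ in precisely that pair.

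The key observation is that any size-$j$ subset $R \subseteq \mathcal{F}$ containing at least one single-substitution pair already forces a false positive. Indeed, if $f \in R$ is such a pair, the alternative sequence obtained by substituting $f$ differs from $\mathbf{p}_x$ only in $f$; since $f$ is lit (being in $R$) while the remaining $h-1$ pairs coincide with $\mathbf{p}_x$ and are therefore lit, that alternative sequence is fully recovered. Hence every such $R$ is counted in $C_{x,j}$, and it suffices to lower-bound $C_{x,j}$ by the number of size-$j$ subsets of $\mathcal{F}$ that meet the single-substitution set.

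To count these subsets I would use a \emph{minimum-element} decomposition. Order $\mathcal{F}$ so that the $C_{x,1}$ single-substitution pairs occupy the first positions $1,\ldots,C_{x,1}$. Every size-$j$ subset containing at least one single-substitution pair has a well-defined smallest-indexed such pair, say at position $l$; its remaining $j-1$ elements are then chosen freely from the $h(|\Delta|-1)-l$ positions that follow, giving $\binom{h(|\Delta|-1)-l}{j-1}$ subsets for each $l$. Summing over $l$ from $1$ to $C_{x,1}$ reproduces exactly the right-hand side, and since each counted subset genuinely triggers a recovery we conclude $C_{x,j} \geq \sum_{l=1}^{C_{x,1}}\binom{h(|\Delta|-1)-l}{j-1}$.

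I expect the main obstacle to be arguing cleanly why this is an inequality rather than an equality, and making the minimum-element decomposition airtight, i.e., verifying that no subset is double-counted and that ``the remaining $j-1$ chosen from later positions'' correctly excludes the lower-indexed single-substitution pairs so that position $l$ really is the minimum. The bound is strict in general because $C_{x,j}$ additionally counts \emph{multi-substitution} configurations, that is, subsets containing no single-substitution pair yet still recovering some valid sequence whose differing positions jointly lie in $R$; the decomposition above deliberately ignores these. Capturing them exactly would require tracking the differing-set sizes of all alternative sequences in $\mathcal{P}_\beta$, so restricting attention to the single-substitution pairs is precisely what yields the clean closed-form lower bound.
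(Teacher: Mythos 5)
Your proof is correct and follows essentially the same route as the paper's: fix one of the $C_{x,1}$ pairs whose individual substitution yields an alternative valid sequence, then choose the remaining $j-1$ extra recoveries from the rest of $\mathcal{F}$, with the sum over $l$ implementing the decrement that avoids double counting. Your write-up is in fact more precise than the paper's — you make explicit the minimum-element ordering that justifies the $\binom{h(|\Delta|-1)-l}{j-1}$ term and the reason the bound is only an inequality (multi-substitution configurations containing no single-substitution pair are ignored), both of which the paper leaves implicit.
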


\begin{proof}
We derive this expression by sequentially selecting one of the $j$ pairs, and then selecting the remaining $j-1$ pairs from the remaining false pairs of $\mathcal{F}$, to construct the set $\mathcal{R}$. Subsequently, two pairs from the $j$ pairs are fixed, and then the rest of the $j-2$, pairs can be picked from the remaining false pairs of $\mathcal{F}$, to construct the set $\mathcal{R}$. This process is repeated up to the case of fixing $j$ pairs to fill $\mathcal{R}$, and then picking the rest of the false pairs from $\mathcal{F}$.\looseness=-1 
\end{proof}

Using the above result, we are now ready to have an expression for $\{C_{j}, j > 1\}$, given by
\begin{IEEEeqnarray}{rCl}
\label{eq:cj_beta_expression1}
C_j \geq \sum_{x=1}^{|\mathcal{P}_{\beta}|} \sum_{l=1}^{C_{x,1}}\binom{h(|\Delta|-1) - l}{j-1}.
\end{IEEEeqnarray}
From the above expression, it is clear that as long as we have $C_{x,j}$, we can compute $\{C_{j}\}$ for \bl{a} given $\beta$. The time complexity of calculating \eqref{eq:cj_beta_expression1} is $\mathcal{O}(j|\mathcal{P}_{\beta}|C_{1})$. By plugging the values of $C_{x,1}$ from Proposition \ref{prop:value_of_C_x_1} in \eqref{eq:cxj_beta_expression}, we obtain the lower bound on $C_{x,j}$, which is used in \eqref{overall_fp2} to obtain the the lower bound on the expression for $\underset{g\in \mathcal{G}_{\beta}}{\mathbb{E}} [\Pr(E_{fp}^{(2)})]$. Obtained lower bound of $\underset{g\in \mathcal{G}_{\beta}}{\mathbb{E}} [\Pr(E_{fp}^{(2)})]$ then plugged in \eqref{eqn:objective5} to obtain the lower bound on the overall average probability of false-positive event.\looseness=-1

\begin{remark}
We want to emphasize that \eqref{eq:cxj_beta_expression} counts a lot of scenarios \bl{that result} in false positives. However, it misses a few scenarios. Therefore, the value of $C_{x,j}$, for $j\geq 2$, given by \eqref{eq:cxj_beta_expression}, is less than the actual number of false positive events. For example, consider a valid segment sequence for hop-length $h=2$, $\beta=2$ and $|\Delta|=7$ as $\{A_1,A_3,A_5\}$. For an extra one recovery, the possibilities that can replace $A_3$ are none (refer to Algorithm \ref{al:algo2}). Similarly, using Algorithm \ref{al:algo2}, the values which can replace $A_5$ is two i.e., $\{A_3\}$ and $\{A_4\}$. Hence the value of $C_{x,1} = 2$. Further, to calculate the number of false positives for the extra two recoveries, i.e., $C_{x,2}$, if we use \eqref{eq:cxj_beta_expression} then we will get $\binom{2.(7-1)-1}{1} + \binom{2.(7-1)-2}{1} = 21$. The value of $C_{x,2}=21$ considers the cases where the last location, i.e., $A_5$, is replaced by $A_3$ or $A_4$. However, this expression fails to calculate the possibilities when the second position, i.e., $A_3$ is replaced by $A_2$, and then at the last location, we can have the possibility of replacement by $A_2$, $A_3$, or $A_4$. These replacements are only possible in cases where the extra recovery is greater than two.
\end{remark}

 In the above section, we have presented an algorithm for calculating $C_j$ for any $\beta$. Next, we will obtain a solution for Problem \ref{problem2}.\looseness=-1
\subsection {On Solving Problem \ref{problem2}}
By plugging the values of $C_{x,1}$ from Proposition \ref{prop:value_of_C_x_1} and by plugging the lower bound values of $C_{x,j}$ from \eqref{eq:cxj_beta_expression} into \eqref{overall_fp2}, we obtain the lower bound for $\underset{g\in \mathcal{G}_{\beta}}{\mathbb{E}} [\Pr(E_{fp}^{(2)}|\alpha,g)]$. Then, the obtained lower bound value of $\underset{g\in \mathcal{G}_{\beta}}{\mathbb{E}} [\Pr(E_{fp}^{(2)}|\alpha,g)]$ is plugged in \eqref{eqn:objective5} to obtain the lower bound on the overall average probability of false-positive events\footnote{This is a lower bound, and we do not provide a theoretical analysis on its tightness due to the analytical complexity. However, as demonstrated through simulation results in Section \ref{sec:results}, we show that the proposed lower bound performs well.}.

To obtain a near-optimal solution to Problem \ref{problem2}, we use the computed lower bound on $\underset{g\in \mathcal{G}_{\beta}}{\mathbb{E}} [\Pr(E_{fp}^{(2)})]$ as the objective function in order to determine the values around $k_2^{*}$. Given that the optimization variable $k_2 \in [1,m_2]$, we need to evaluate the objective function for all $k_2 \in [1,m_2]$, and then find its minimum value. Towards that goal, as $k_2$ varies, we observe that we have to compute the terms $\Pr(\alpha)$  in \eqref{eqn:Pr_alpha} for every $\alpha \in [1,\min(m_2,k_2h)]$ and correspondingly the values of $p_1$ and $p_2$. Whereas the value of $C_{x,j}$ needs to be computed only once. 
Once the value of $\underset{g\in \mathcal{G}_{\beta}}{\mathbb{E}} [\Pr(E_{fp}^{(2)})]$ is ready, we use gradient descent algorithm to find the optimal value of \eqref{eqn:objective5} by varying the value of $k_2$. In the next section, we will present the complexity analysis of our proposed algorithms to obtain the parameters of $\textbf{BF}_2$.\looseness=-1   

\subsection{Complexity Analysis}
Recall that in the previous section, we proposed a low-complexity algorithm to compute the optimal number of Hash functions that minimizes $\underset{g\in \mathcal{G}_{\beta}}{\mathbb{E}} [\Pr(E_{fp}^{(2)})]$ for given network parameters $h$, $\beta$, and $r$.
In this section, we quantify the computational overhead required to execute each step in the proposed method. Our method involves populating Table \ref{T:falsepositiveCalculation}, and subsequently using Table \ref{T:falsepositiveCalculation}, we derive the expressions for the lower bound on the overall average probability of false positives, and then solve for the minima over $k_2$. To populate the first column of table \ref{T:falsepositiveCalculation}, we enumerate the set of valid paths in $\mathcal{P}_{\beta}$ using a $\beta$-ary tree which has time complexity of $\mathcal{O}((\beta+1)^h)$, which is exponential in $h$ and it is a one-time process for given $h$, $\beta$, and $r$. After that, the second column of Table \ref{T:falsepositiveCalculation} is computed by evaluating $C_1 = \sum_{x \in \mathcal{P}_{\beta}} C_{x,1}$, where $C_{x,1}$ is calculated using Algorithm \ref{al:algo2} with time complexity $\mathcal{O}(h)$, resulting in total time complexity of $\mathcal{O}(|\mathcal{P}_{\beta}|h)$ which is polynomial in $h$ given $\mathcal{P}_{\beta}$ is already enumerated. Other columns of Table \ref{T:falsepositiveCalculation} are calculated using \eqref{eq:cj_beta_expression1} which requires the time complexity of $\mathcal{O}(jh|\mathcal{P}_{\beta}|C_{1}|\Delta|)$. Finally, for calculating the optimal value of $\underset{g\in \mathcal{G}_{\beta}}{\mathbb{E}} [\Pr(E_{fp}^{(2)})]$, we evaluate \eqref{eqn:objective5} over $[1,k_2]$ values using a gradient descent algorithm which has a time complexity of $\mathcal{O}(k_2)$. In the next section, we will present experimental results to validate the effectiveness of the above-defined solution.\looseness=-1 


\section{Experimental Results} \label{sec:results}
In this section, we present experimental results to showcase the efficacy of the proposed solution in Section \ref{sec:optimization_of_BF2}.
First, we will present numerical results on the near-optimality of our solution in solving Problem \ref{problem2}. Next, we will compare our method of embedding spatial-provenance with a standard baseline wherein the GPS coordinates are embedded in the packet by each forwarding node in multi-hop communication.\looseness=-1

\subsection{Verification of Analytical Expressions}
To verify the correctness of the analytical expressions on the false positive rate for $\beta\geq1$, we compare the results of the analytical expression in Section \ref{sec:Low_complexity_C_j_any_beta} with the false positive rate generated through simulation results. For generating the simulation results, we use a network with configuration $N=6, ~h = 5$ and $|\Delta| = 15$. The size of the Bloom filter $\textbf{BF}_1$ is chosen such that it gives rise to negligible false positive rates, thereby satisfying the condition  $\Pr(E_{fp}^{(1,2)}) \approx \Pr(E_{fp}^{(2)})$. Henceforth, we denote the probability of false positive event $\Pr(E_{fp}^{(2)})$ as $P_{fp}$. For embedding spatial-provenance, $\textbf{BF}_2$ of size $m_2=100$ bits is used. To obtain the simulation results, $10^6$ packets are sent from the source node to the RSU via a multi-hop network for each value of $k_2$, where $k_2$ varies from $2\leq k_2 \leq m_2$. The corresponding $P_{fp}$ values are plotted as a function of $h$ in Fig \ref{fig:h_5_delta_15_beta_2}. From Fig. \ref{fig:h_5_delta_15_beta_2}, we observe that the curve of the analytical expression acts as a good approximation to that generated via simulation results, and the minima of both the plots coincide. Similar experiments were conducted by changing the network parameters $\beta$ and $m_2$, the results of which are depicted in Fig. \ref{fig:h_5_delta_15_beta_10}, Fig. \ref{fig:h_5_delta_15_beta_15} and Fig. \ref{fig:h_8_delta_15_beta_2}.\looseness=-1 
\begin{figure}[ht!]
\centering
\begin{subfigure}{0.49\columnwidth} 
\caption{}
\label{fig:h_5_delta_15_beta_2}
    \includegraphics[trim={0.3cm 0 1.5cm 1cm},clip,scale=0.3, width =\textwidth]{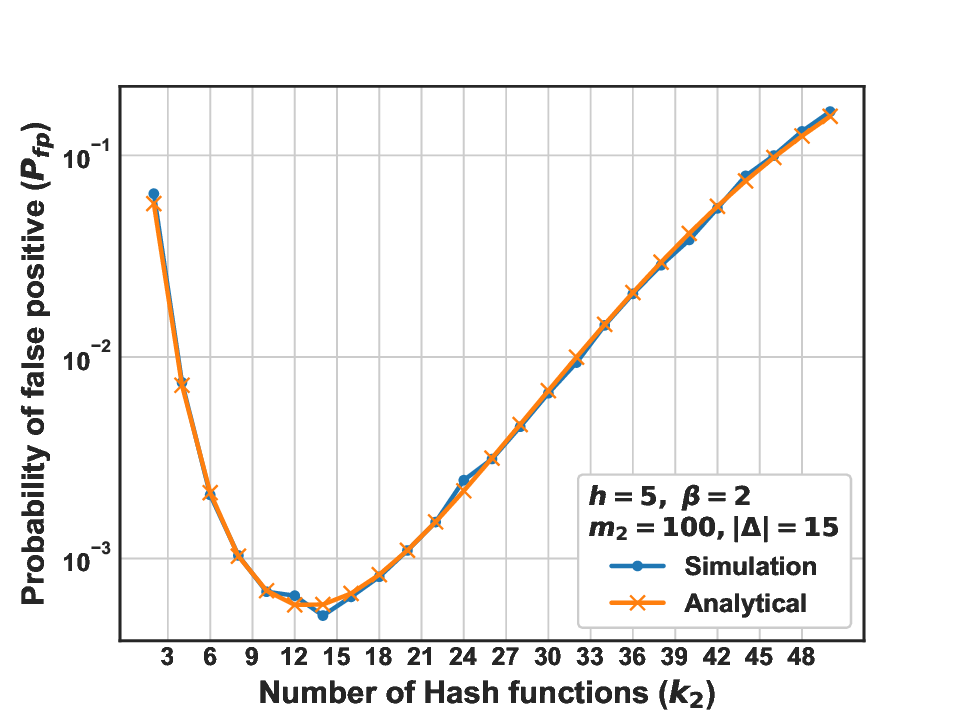}   
\end{subfigure}
\begin{subfigure}{0.49\columnwidth}
\caption{}
\label{fig:h_5_delta_15_beta_10}
    \includegraphics[trim={0.3cm 0 1.5cm 1cm},clip,scale=0.3, width =\textwidth]{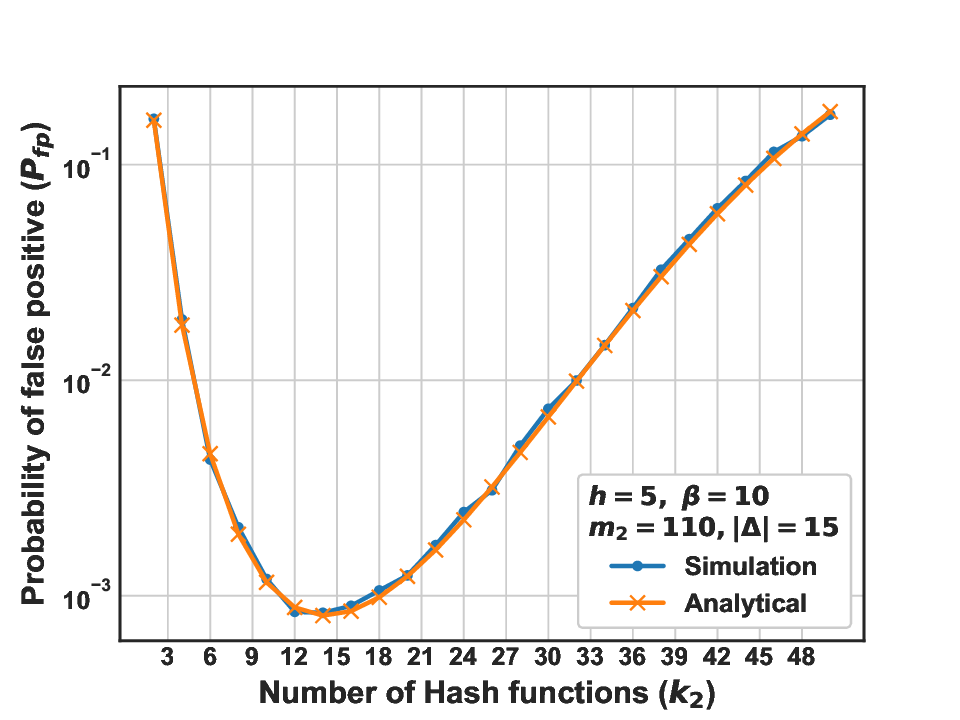}
    
\end{subfigure} 
\begin{subfigure}{0.49\columnwidth}
\caption{}
\label{fig:h_5_delta_15_beta_15}
    \includegraphics[trim={0.3cm 0 1.5cm 1cm},clip,scale=0.3, width =\textwidth]{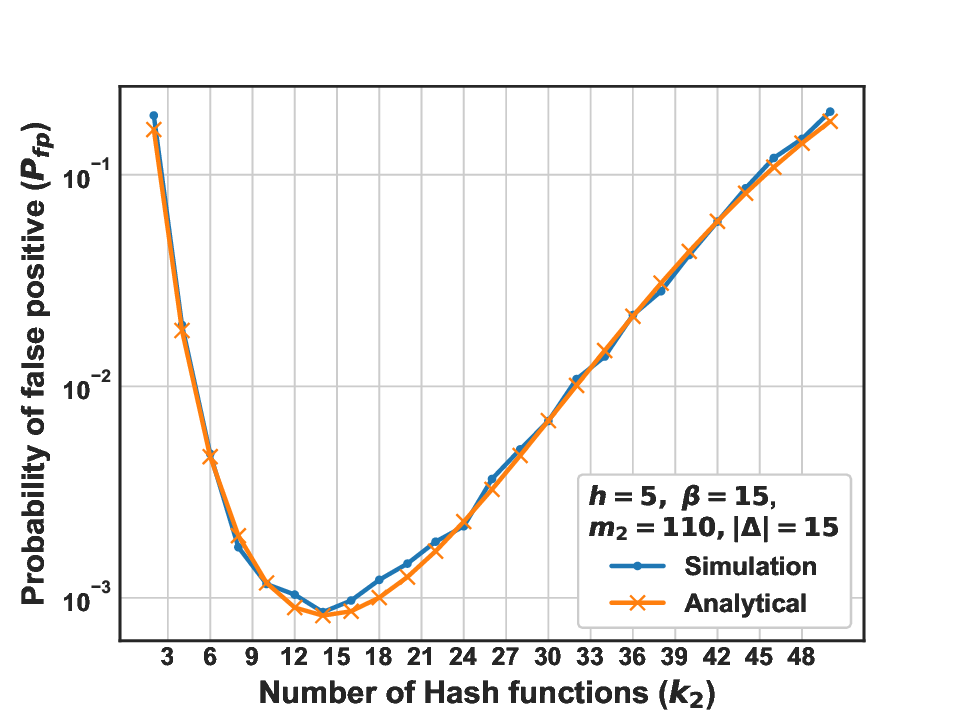}
     
\end{subfigure}  
\begin{subfigure}{0.49\columnwidth}

\caption{}
\label{fig:h_8_delta_15_beta_2}
    \includegraphics[trim={0.3cm 0 1.5cm 1cm},clip,scale=0.3, width =\textwidth]{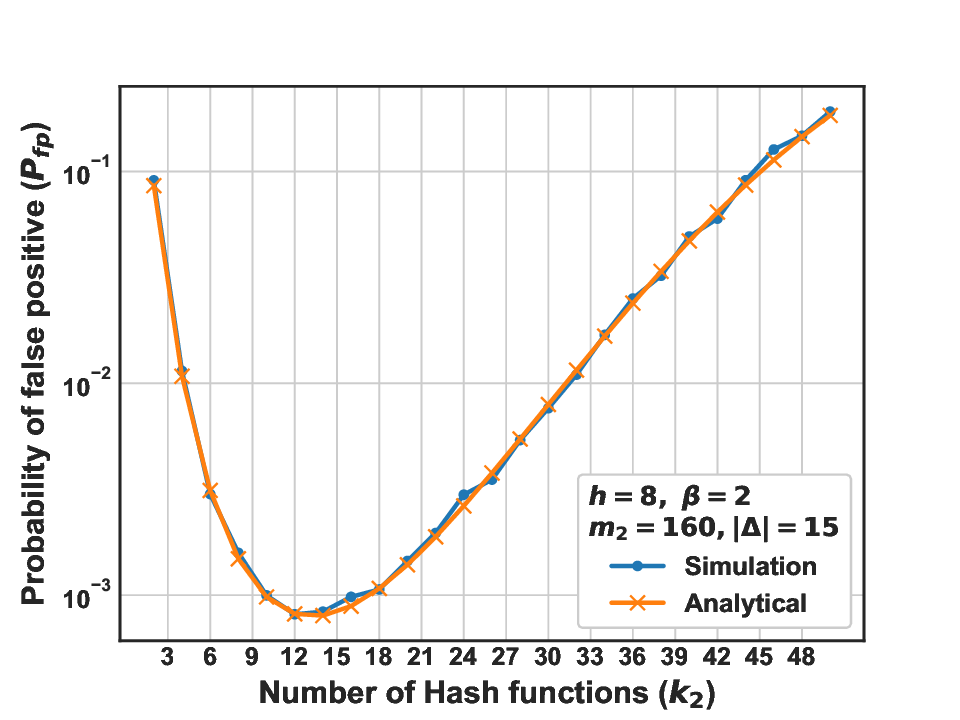}
    
\end{subfigure} 
\caption{Comparison between the analytical bound and the simulation results for the network setting of (a) $N=6,~\beta =2,~m_2=100, ~|\Delta| =15 $, (b) $N=6,~\beta =10,~m_2=110, ~|\Delta| =15 $, (c) $N=6,~\beta =15,~m_2=110, ~|\Delta| =15 $, (d) $N=9,~\beta =2,~m_2=160, ~|\Delta| =15$.\looseness=-1} 
\end{figure}
\begin{figure}
\vspace{-0.5cm}
    \centering
    \includegraphics[trim={0cm 0.0cm 1.5cm 1.0cm},clip,scale=0.45]{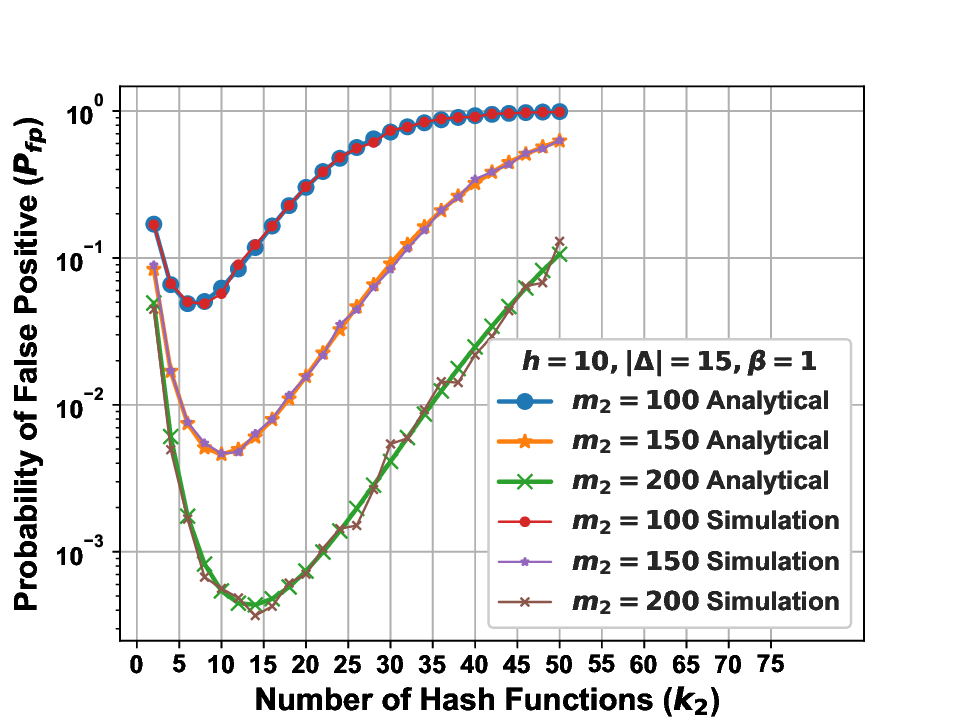}
    \caption{ Comparison of $P_{fp}$ using the analytical bound and the simulation results with varying values of $m_2$ on a network of $N = 11, ~\beta = 1$ with $|\Delta| = 15$ and $h=10$.\looseness=-1}
    \label{fig:m-varying}
    \vspace{-0.6cm}
\end{figure}

\begin{figure}
    \centering
    \includegraphics[trim={0cm 0.0cm 1.5cm 1.0cm},clip,scale=0.4]{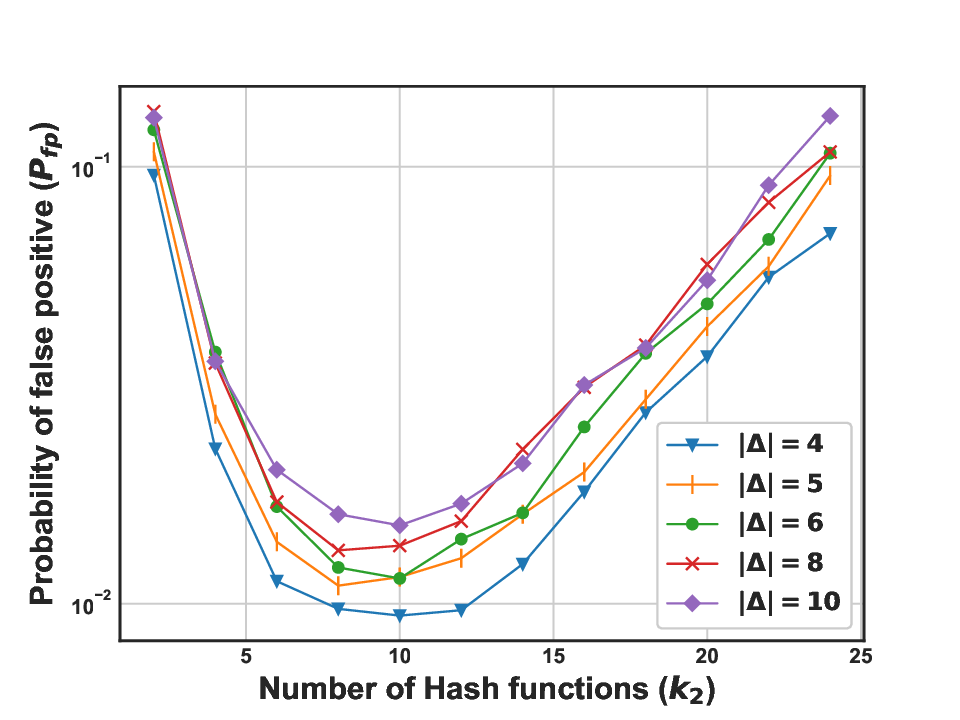}
    \caption{Comparing $P_{fp}$ of simulated results for different values of $|\Delta|$, where the underlying network parameters are $N = 16$, $m_2 = 200$ bits, $\beta = 1$, and $h = 15$.\looseness=-1}
    \label{fig:Delta_varying}
    \vspace{-0.25cm}
\end{figure}
  
 To study the behaviour of $P_{fp}$ with respect to $m_2$, 
 we simulated a network with $N = 11$, $h = 10$ and $|\Delta| = 15$. With optimized $\textbf{BF}_1$, $m_2=200$ bits is used for embedding the spatial-provenance, and the number of Hash functions varies from $2 \leq k_2 \leq m_2$. For the given network parameters, we compare the analytical expression with the simulation results in Fig. \ref{fig:m-varying}. From Fig. \ref{fig:m-varying}, we observe that the curve of the analytical expression acts as a good approximation to the simulation results, and the minima of both plots coincide. 
 We present similar results by varying the value of $m_2$, where we observe that $P_{fp}$ decreases as the size of $m_2$ increases. This implies that with a larger packet size, localization accuracy at the RSU will improve subject to a given choice of $|\Delta|$ as agreed by the nodes. For the network configuration, where $N=16$, $h=15$, $m_2 =200$ bits and $\beta =1$, Fig. \ref{fig:Delta_varying} shows that if the RSU needs to learn a higher resolution of localization for a fixed  size of Bloom filter, the accuracy of localization reduces. Thus, the only way to learn a higher resolution of location with high accuracy is to increase the Bloom filter size.\looseness=-1

\begin{figure}[!tbp]
  \begin{subfigure}[b]{0.49\columnwidth}
  \caption{
     }
    \includegraphics[trim={1cm 0.2cm 1.5cm 0.9cm},clip,scale=0.32]{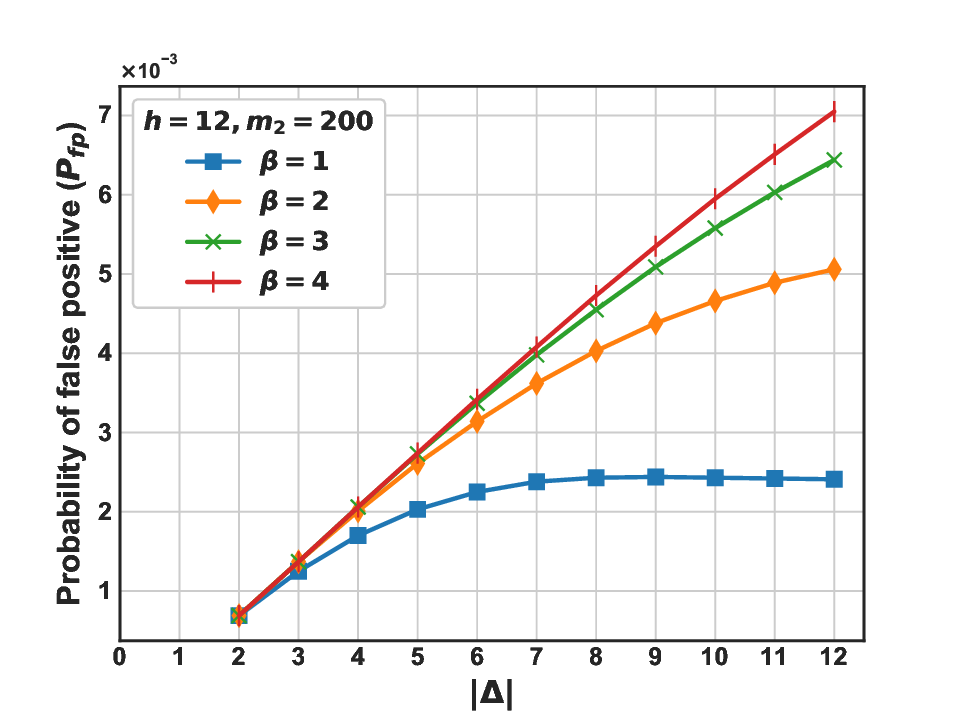}
     
     \label{fig:Tradeoff_delta_Vs_pfp}
  \end{subfigure}
  \hfill
  \begin{subfigure}[b]{0.49\columnwidth}
  \caption{
    }
    \includegraphics[trim={0.5cm 0.2cm 1.5cm 1cm},clip,scale=0.32]{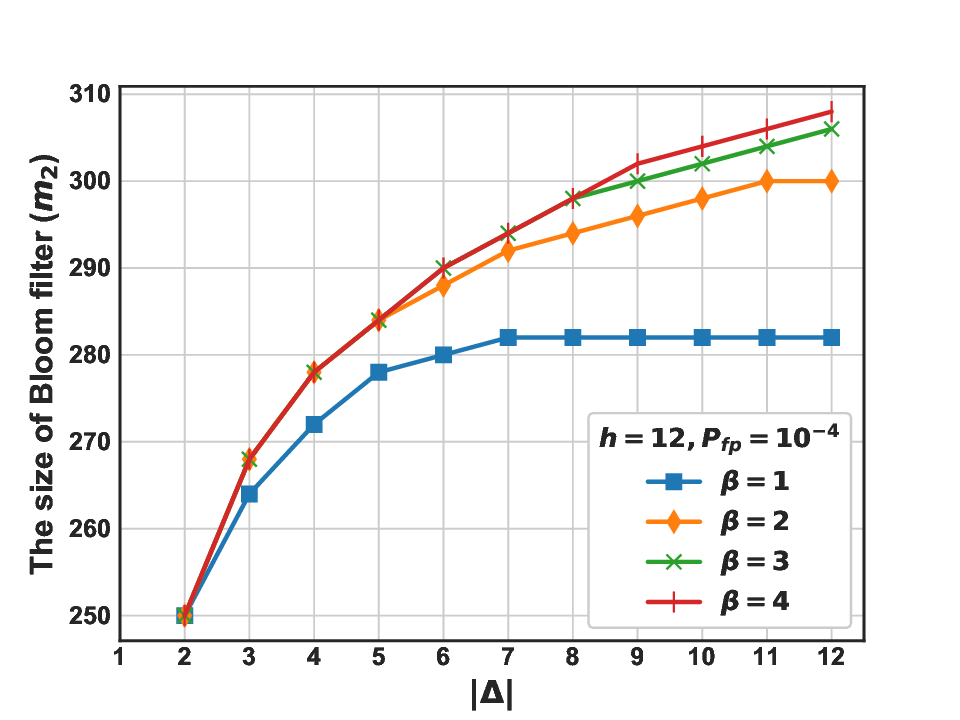}
    
    \label{fig:Tradeoff_Delta_Vs_m2}
  \end{subfigure}
  \caption{(a) Analysis of the relationship between $P_{fp}$ and $|\Delta|$ for a network configuration where $h = 12$ and $m_2 = 200$ bits. (b) Analysis of the relationship between the required size of the Bloom filter and $|\Delta|$ for a network setup with $h = 12$ and a fixed false positive rate of $P_{fp} = 10^{-4}$. The magnitude represented on the y-axis is measured in bits.\looseness=-1}
  \vspace{-0.35cm}
\end{figure}

Thereafter, we study the impact of privacy on the false positive rate as a function of $\beta$. For this study, we fix the size of Bloom filter $m_2=200$ bits for a fixed hop network of $h=12$ where $N=13$. For a given $\beta$ and fixed $|\Delta|$, we obtain $P_{fp}$ resulting from the traversal of $10^6$ packets from the source node towards the RSU via a multi-hop network. We plot $P_{fp}$ as a function of $|\Delta|$, where $2\leq |\Delta|\leq 12$, for a given $\beta$ in Fig. \ref{fig:Tradeoff_delta_Vs_pfp}. 
We repeated the same experiment with different values of $\beta$.  When $13$ nodes are distributed across $|\Delta| = 2$ segments, the spatial privacy of nodes is high, whereas when the nodes are distributed across $|\Delta| = 12$ segments, the spatial privacy of nodes is low. For a given $\beta$, which depicts the transmission power of the nodes, i.e., how far a node can transmit, from Fig. \ref{fig:Tradeoff_delta_Vs_pfp}, we observe that $P_{fp}$ increases with $|\Delta|$. An increase in $P_{fp}$ with an increase in $|\Delta|$ is the effect of the number of node-location pairs increasing with increasing $|\Delta|$. In other words, for a fixed-size Bloom filter and for a given number of nodes $N$, if we relax the localization privacy, then we will lose the reliability of recovery, which is depicted by an increase in false positives.\looseness=-1 

We also study the trade-off for the required size of the Bloom filter with respect to the number of divisions of segments. For a fixed false positive rate, $P_{fp} = 10^{-4}$, for both the Bloom filters and fixed hop length $h=12$ with $N=13$, we vary the number of segments from $|\Delta| = 2$ to $|\Delta| =12$ and plotted the required size of $m_2$ as a function of $|\Delta|$ in Fig. \ref{fig:Tradeoff_Delta_Vs_m2}. From Fig. \ref{fig:Tradeoff_Delta_Vs_m2}, we observed that for higher resolution of spatial location, a higher number of Bloom filter bits are required to maintain the $P_{fp}=10^{-4}$.\looseness=-1 

In the next section, we will compare our proposed protocol with the baseline method of sharing the GPS coordinates.\looseness=-1
\vspace{-0.25cm}
\subsection{Benefits of using Bloom Filter}
To assess the benefits of using Bloom filters for spatial-provenance, we empirically compare the communication overhead in two multi-hop forwarding scenarios: (i) nodes share encrypted GPS coordinates with the RSU, and (ii) nodes use Bloom filters to convey their segment ID using our proposed protocol.\looseness=-1

In the former method, for sharing the encrypted GPS location, we assume that the nodes append their node ID with their GPS coordinates at the end of the packet. 
To minimize the amount of information shared between nodes, we exclusively transmit the latitude and longitude coordinates at the coarse level, disregarding additional GPS parameters such as time and elevation. This results in a cumulative seven bytes of data, with one byte allocated for the node ID and six bytes for the GPS information. In order to provide confidentiality, we encrypt this data using AES-128, which increases the data length from seven to sixteen bytes. In the same manner, each forwarding node appends its sixteen bytes of encrypted location information at the end of the packet as it makes its way towards the RSU. It is assumed that the pair-wise AES keys are shared between each node and the RSU. When the packet is received at the RSU, it will recover the ID of the forwarding nodes and their corresponding GPS coordinates by successively decrypting the tail of the packet. We conducted an experiment to implement this method on a network where the number of nodes $N$ is equal to the hop length, i.e., $h=N-1$. By varying the number of nodes from four to fifteen, we collected the data on the average packet size as a function of the hop-length $h$ and plotted them in Fig. \ref{fig:Baseline_Size}. In this context, since the packet size is different at each hop, the average packet size is calculated by dividing the sum of the packet sizes at all hops by the hop length. To perform the Bloom filter experiment, we partition the RSU's coverage area into ten segments with $|\Delta| =10$ and build a multi-hop network with the number of hops as $h=N-1$ with $\beta=1$.  In this experiment, nodes share their segment information using our method of embedding spatial-provenance. The false positive rate, $P_{fp}$, for the experiment, is set at $10^{-4}$ for both the Bloom filters $\textbf{BF}_1$ and $\textbf{BF}_2$. To obtain the desired false positive rate, we use the optimal size of both the Bloom filters and the related optimal number of the Hash functions. We plot the sum of the sizes of both Bloom filters, $m_1+m_2$, as a function of $h$ in Fig. \ref{fig:Baseline_Size}.  
As shown in Fig. \ref{fig:Baseline_Size}, an increase in hop-length results in greater space advantages for the Bloom filter-based scheme as compared to the scheme of encrypted GPS.\looseness=-1 

\begin{figure}[!tbp]
  \begin{subfigure}[b]{0.49\columnwidth}
  \caption{}
    \includegraphics[trim={0.5cm 0 1.5cm 1cm},clip,scale=0.31]{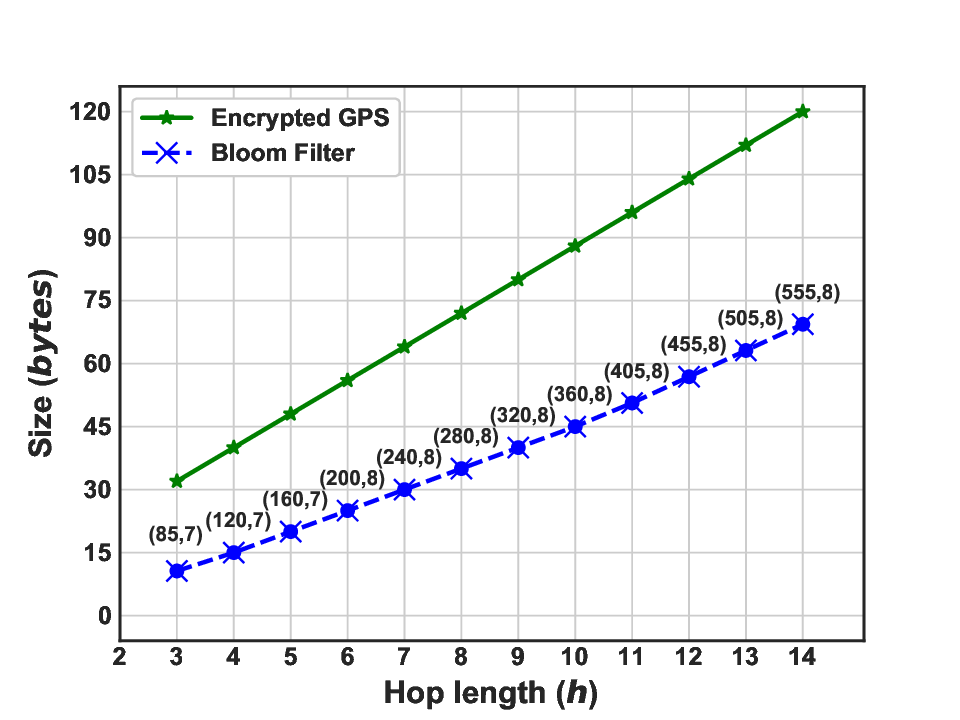}
     
     \label{fig:Baseline_Size}
  \end{subfigure}
  \hfill
  \begin{subfigure}[b]{0.49\columnwidth}
  \caption{
    }
    \includegraphics[trim={0.5cm 0 1.5cm 0.9cm},clip,scale=0.31]{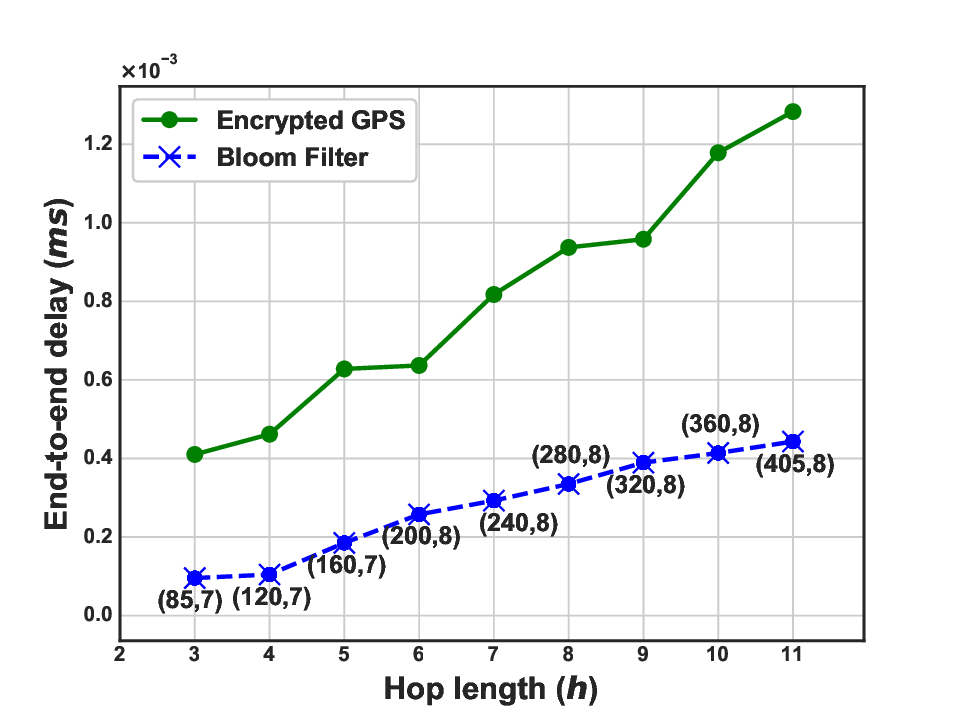}
    
    \label{fig:Baseline_time}
  \end{subfigure}
  \caption{Comparison of parameters between sharing encrypted location and sharing location using Bloom filter for $\beta = 1$ and $|\Delta| = 10$. For the Bloom filter, the results are generated for $P_{fp}=10^{-4}$. The graph displays the corresponding optimized values of $m_1+m_2$ and $k_2$ as coordinates. (a) Comparison of space-complexity (b) Comparison of end-to-end delay.\looseness=-1 }
  \vspace{-0.4cm}
\end{figure}

To capture the time-complexity overhead, we conduct an experiment using the same network setting as defined above. For time-complexity overhead, we measured end-to-end time delay, which is the sum of the time taken by each node in the forwarding path to embed its spatial-provenance using the GPS method and our proposed protocol of embedding spatial-provenance. We calculated the average end-to-end time delay over $10^6$ packets sent from the source node to the RSU for each hop-length $h$ and plotted the average end-to-end delay as a function of $h$ in Fig. \ref{fig:Baseline_time}. We observed that Bloom filters offer benefits in terms of time-complexity over the GPS-based scheme.\looseness=-1

Overall, for a smaller value of $N$, our suggested method necessitates a significantly reduced packet size; however, a large packet size may be necessary for a large $N$, in order to drive down the false-positive rate of $\textbf{BF}_2$ to a small value. However, in terms of the benefits in time-complexity, our proposed method is still better than the GPS-based scheme due to the utilisation of Hash functions.\looseness=-1 

\section{Testbed Results and Validation Using Compression Algorithms}\label{sec:testbed_results_practical_aspects}
To study the feasibility of the spatial-provenance algorithms on the state-of-the-art wireless protocols, we have implemented spatial-provenance algorithms on the ZigBee and LoRa protocols. Since there is no practical implementation of a spatial-provenance framework for existing wireless networks, no dedicated bits are allocated for the same in their packet structure. Therefore, in our prototype, we use a portion of the payload to carry the Bloom filter for spatial-provenance. While implementing our proposed protocols, we discovered that ZigBee and LoRa protocols have limited payload size. Therefore, we need to look for the possibilities of compression for spatial-provenance. In the next section, we study how to compress the provenance to suit payload constraints.\looseness=-1 
\vspace{-0.25cm}
\subsection{Motivation for Compression} \label{sec:Compression_RAKE}
Recall that, we use CLBF to embed the provenance information (refer to Section \ref{sec:Bloom Filters}). Suppose that $m_1 = 100$ bits and $m_2 = 100$ bits, and the corresponding $k_1^* = 6$ and $k_2^*=9$, respectively. When the source node embeds its provenance in Bloom filters, the maximum number of 1’s that can be lit in Bloom filters is $k_1^* + k_2^* = 15$, the rest of the bits will be 0’s, i.e., atleast 185 bits will be 0’s. Therefore, we can potentially exploit this redundancy and send the packet in a compressed format, which will reduce the payload size of the packet. Intermediate nodes in the forwarding process can decompress the packet, embed its information, and then compress it again.
Assuming no overlap in bit positions, the node at the second hop position may lit up to $30$ bits, still leaving at least $170$ bits as 0's. Therefore, compression can be repeatedly applied across forwarding nodes to exploit sparsity and reduce provenance overhead.\looseness=-1

For accurate recovery of the Bloom filter, we focus on lossless compression algorithms, regarding the choice of the compression algorithm. Among many available compression algorithms, such as Huffman encoding \cite{huffman}, run length encoding, Arithmetic encoding, and Compressed Sparse Representation, we found that RAKE  \cite{rake} compression is most suitable for compressing small binary packets and it is easy to implement and has a minimal dictionary size.\looseness=-1 

To study the benefits of RAKE compression in our method for a given hop-length $h$, we use the optimal size of $\textbf{BF}_2$ and an optimal number of Hash functions for $P_{fp} = 10^{-4}$. The source node embeds its segment ID in $\textbf{BF}_2$ as per our proposed protocol, compresses it using RAKE compression and forwards it towards the next node. On reception of a packet at the next node, it will decompress $\textbf{BF}_2$ and embed its segment ID in $\textbf{BF}_2$, compress it and forward it towards the next node. Similarly, decompression, the embedding of segment ID and compression operations will be performed at each forwarding node. We measure the average provenance size, which is calculated as the ratio of the sum of the size of compressed provenance at each node to the number of hops. With $10^{6}$ packets sent from the source node towards the RSU for each $h$, we plot the average compressed size of $\textbf{BF}_2$ as a function of $h$ in Fig. \ref{fig:compression_benefit_size}. We observed that in a space-constrained scenario, RAKE compression will be useful. In the same experiment, we also measure the end-to-end time delay, which is plotted in Fig. \ref{fig:time_compressVsUncompressed}. We observe that there is a tradeoff with respect to time in using RAKE compression as extra time taken in compression and decompression at each forwarding node.\looseness=-1 

\begin{figure}[!tbp]
  \begin{subfigure}[b]{0.49\columnwidth}
  \caption{}
    \includegraphics[trim={0.5cm 0.2cm 1.5cm 1.2cm},clip,scale=0.31]{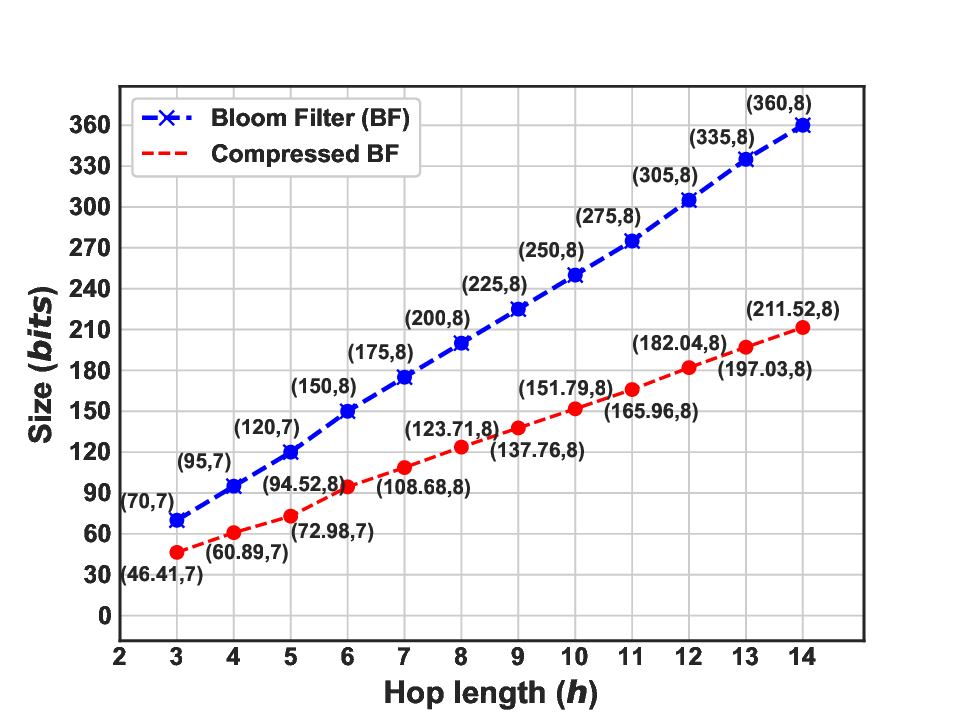}
     
    \label{fig:compression_benefit_size}
  \end{subfigure}
  \hfill
  \begin{subfigure}[b]{0.49\columnwidth}
  \caption{
    }
    \includegraphics[trim={0.4cm 0 1.5cm 0.9cm},clip,scale=0.31]{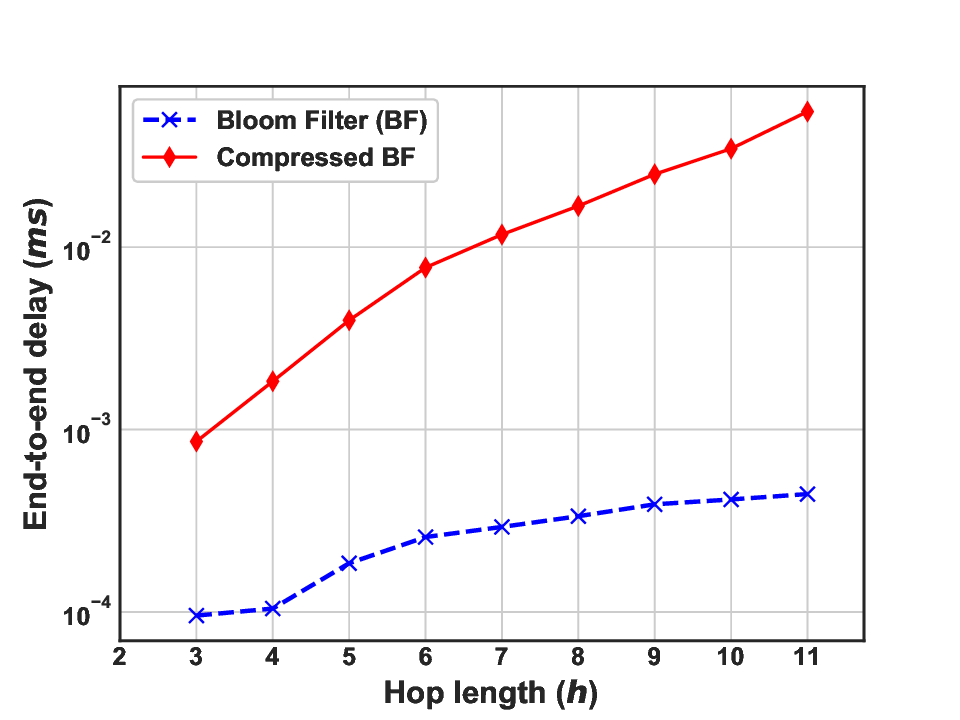}
    
    \label{fig:time_compressVsUncompressed}
  \end{subfigure}
  \caption{Comparing the parameters of $\textbf{BF}_2$ with those of a compressed Bloom filter that uses RAKE compression with $P_{fp} = 10^{-4}$. (a) Comparison of the required size of Bloom filter. The first coordinate on the plots depicts the $\textbf{BF}_2$ size, and the second coordinate shows the optimal number of Hash functions. (b) Comparing end-to-end delay.\looseness=-1 }
  \vspace{-0.3cm}
\end{figure}
\vspace{-0.25cm}
\subsection{Experimental Configuration for Spatial-Provenance}
The testbed system, depicted in Figure \ref{fig:XBee}, comprises XBee S2C \cite{XBee_S2C} devices that operate on the ZigBee protocol inside the ISM band. 
We utilize Semtech's LoRa modules \cite{LoRa} for long-distance communication, which operates river ISM bands. For computational needs, we utilize Raspberry Pis and high-performance computing devices such as laptops. To showcase a stationary vehicular network with limited-range communication, we employ a combination of Raspberry Pi and XBee S2C devices to simulate a vehicle. However, for long-distance communication, we employ LoRa and Raspberry Pi to create a static vehicular model. In both of these scenarios, a high-performance computing device functions as the Roadside Units (RSUs) and can utilize either an XBee or LoRa technology.\looseness=-1
  
\begin{figure}[!h]
    \centering
    \includegraphics[width = 0.4\textwidth]{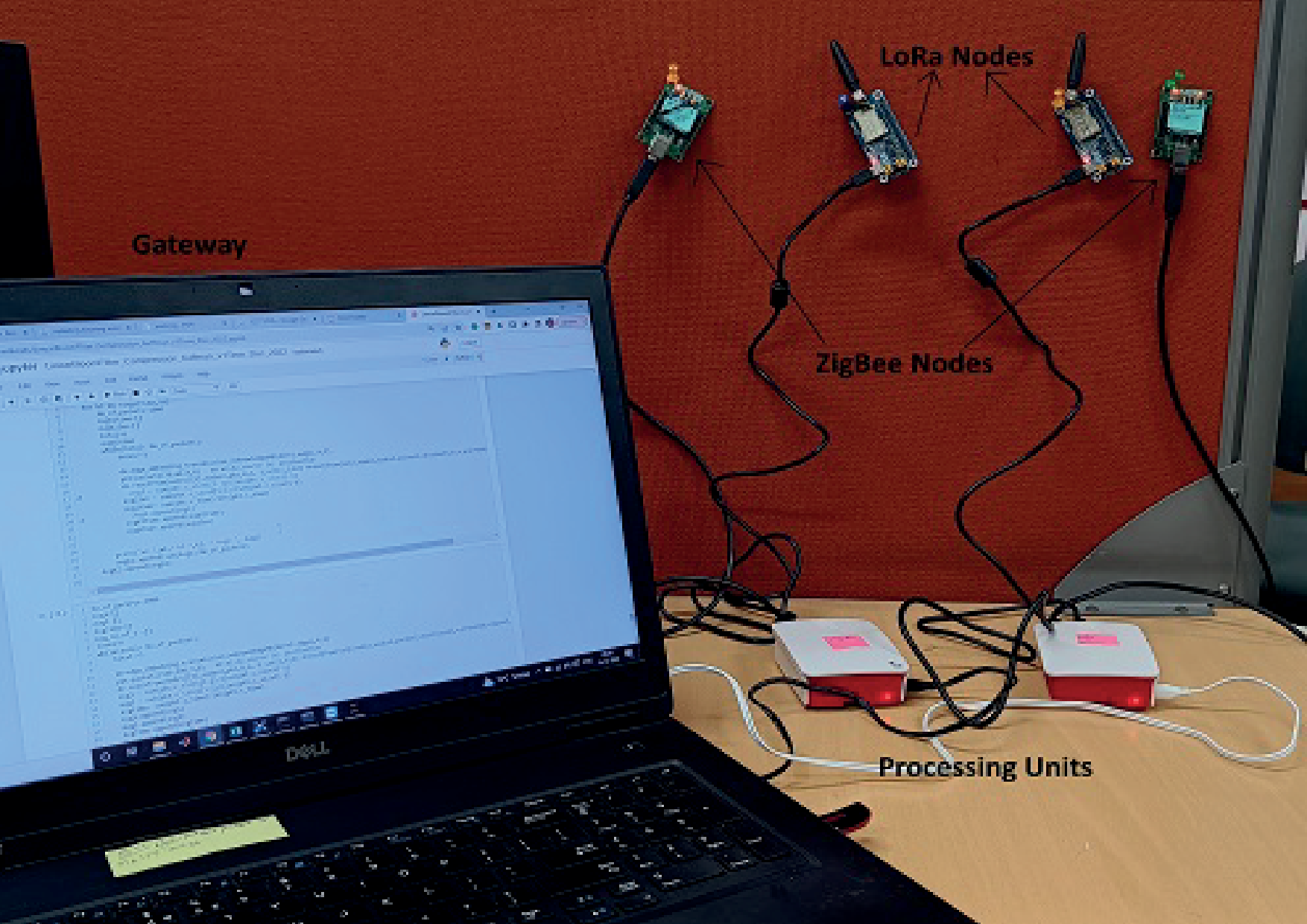}
    \setlength{\belowcaptionskip}{-10pt} 
    \caption{A testbed consisting of XBee and LoRa devices showcases a static vehicular network.\looseness=-1}
    \label{fig:XBee}
\end{figure}

\subsubsection{Hardware Setup}
To prepare the setup, we distribute the vehicular nodes across a geographical area such that any node can communicate with the RSU in a multi-hop manner. Given that vehicles are static in our setup, we hardcode their location information on them so that they can identify their segment identity upon receiving the dictionary from the RSU. For the routing protocol, AODV is used for multi-hop communication from every vehicle to the RSU. For downlink communication, the RSU communicates directly with every vehicle in a single hop.\looseness=-1

As discussed in Section \ref{sec:BF_embedding}, a source vehicle embeds its spatial-provenance into the Bloom filter and forwards it to the next vehicle in the path. Subsequently, the next vehicle repeats the process of embedding its spatial-provenance until the packet reaches the RSU. To execute these steps, we choose the underlying parameters, such as the number of Hash functions for the Bloom filter and the Bloom filter size, based on an offline optimization process explained in Section \ref{sec:optimization_of_BF2}. Furthermore, as discussed in Section \ref{sec:Compression_RAKE}, we ask each vehicle to use RAKE compression. Consequently, every vehicle that intends to embed its spatial-provenance implements a RAKE decompression algorithm on the reception of a packet. Finally, once the packet is received at the RSU, it verifies the location of the vehicles using their identities and the dictionary as explained in Section \ref{sec:recovery}. This way, every vehicle is localised at the RSU, respecting the relaxed-privacy constraints of the vehicles.\looseness=-1

\subsection{Takeaways from Our Testbed}
We performed multiple experiments on our testbed and measured the space- and time-complexity of our framework for a 5-hop network where 10 vehicles are scattered across 5 segments. The vehicles employ a Bloom filter with a capacity of 100 bits and utilize eight Hash functions to incorporate the IDs of their segments. The average sparsity of a Bloom filter is calculated by dividing the total of the number of bits that are set to 1 in the Bloom filter across all hops by the product of the number of hops and the size of the Bloom filter. Upon using the RAKE compression technique at each hop, we further compute the average provenance size, which is determined by dividing the total size of the compressed provenance at each node by the number of hops. The metrics mentioned above are presented in Table \ref{tab:compression_table} after doing ten thousand rounds of iterations for the Bloom filter sizes of 100, 125, and 150. For example, if we use a Bloom filter with 100 bits, the average sparsity will be $21\%$, and this can be compressed to $77$ bits. Hence, when utilizing XBee, which possesses a predetermined payload capacity of 255 bytes, our spatial-provenance method only requires a mere $3.9\%$ of the payload. When utilizing LoRa with a data rate of 7, which has a constant payload size of 222 bytes, our spatial-provenance method takes up $4.5\%$ of the payload. Without compression, our spatial-provenance would take up around $6\%$ of the payload in both XBee and LoRa. Although RAKE compression decreases the space used, it also results in an increase in end-to-end packet delay compared to the method without any compression algorithm.\looseness=-1

The LoRa network is employed for long-distance communication, resulting in a significantly greater coverage area. In contrast, the XBee network has a narrower coverage area because it is designed for short-distance communication. Thus, when comparing the number of segments, LoRa provides greater absolute privacy while XBee offers less. However, the normalized privacy remains equal in both technologies. In summary, our experiment demonstrates that spatial-provenance algorithms can be implemented on ZigBee and LoRa networks, provided that the RSU is capable of acquiring localization data with low-to-moderate precision. However, achieving a better level of accuracy in localization can only be accomplished by utilizing a greater proportion of the available payload capacity. Next, we present the practical aspects of \bl{the} dictionary broadcast mechanism by the RSU.\looseness=-1

\begin{table}[!h]
    \centering
    \caption{Compressed provenance size.}
    \resizebox{6cm}{!}{\begin{tabular}{|c|c|c|}
    \hline
         Provenance size  & Sparsity (Avg.)   & Avg. Provenance size  \\(bits) & $\%$ & after compression (bits)  \\
    \hline
         100 & 20.92 & 76.92 \\
         125 & 17.17 & 86.34\\
         150 & 14.58 & 94.03 \\
    \hline
    \end{tabular}}
    \label{tab:compression_table}
\end{table}

\vspace{-0.3cm}

\subsection{Practical Aspects of Implementing Spatial-Provenance} \label{sec:Dictionary}
Practical implementation of the proposed spatial-provenance protocol requires a reliable mechanism for dictionary broadcast by the RSU. As discussed in Section \ref{sec:network_model}, dictionary mappings are predetermined by an amicable solution between the localization requirement of the RSU and the privacy requirement of the vehicles. We envisage that the dictionary mappings remain static unless there is a network-wide reconfiguration, such as a change in segment granularity. The RSU broadcasts the dictionary containing GPS-to-segment ID mappings periodically on the downlink.\looseness=-1

We envision that a vehicle receives a dictionary when it enters the coverage area through a periodic broadcast by the RSU. Once a vehicle receives the dictionary, it caches it locally and uses it throughout its presence in the coverage area of the RSU. With the help of the dictionary, vehicles map their real-time GPS coordinates to the corresponding segment ID using a lightweight local computation process. Subsequently, this segment ID is used during the packet forwarding process to embed the segment ID onto $\textbf{BF}_2$. Next, we present an analysis on the choice of the periodicity of the dictionary broadcast, and its impact on our proposed protocol.\looseness=-1

Let the linear coverage area of the RSU be $L$ meters in length, which is divided into $r \in \mathbb{N}$ equal-length segments, each of length $l=\frac{L}{r}$ meters. The dictionary, of size $B$  bytes, is broadcast periodically at an interval of $\tau_b$ seconds, and the corresponding transmission delay of dictionary over a wireless link of bandwidth $R$ bits/second is $\tau_{tx}= \frac{8B}{R}$ seconds, and the propagation delay from the RSU to the farthest segment in its coverage area is $\tau_{prop} = \frac{L}{c}$ seconds, where $c$ is the signal propagation speed. Therefore, the maximum end-to-end delay over the downlink for the dictionary to reach the farthest segment in the coverage area is by $\tau_t=\tau_{tx} +\tau_{prop}$ seconds. After receiving the dictionary, the vehicle spends an additional $\tau_d$ seconds in parsing and caching it locally. Assuming vehicles entering the coverage area of the RSU at random times relative to the periodic dictionary broadcast, the waiting time for the vehicle to receive the dictionary after entering the farthest segment is modeled as a uniformly distributed random variable $\tau_l \sim \mathbf{U}(0,\tau_b)$, where $\mathbf{U}(0,\tau_b)$ denotes the uniform distribution over the interval $[0,\tau_b]$. Therefore, at the vehicle, the dictionary is available for use after $\tau_l+\tau_t+\tau_d$ seconds.\looseness=-1

We now define the conditions on $\tau_b$ to ensure that a vehicle is able to use the dictionary while the vehicle is in the segment when implementing our proposed protocol. Assuming the maximum speed of the vehicle is $v$ meters per second, the segment retention time, i.e., the time a vehicle remains in a segment, is $\tau_s=\frac{l}{v}$ seconds. To ensure that each vehicle has sufficient time within a segment to receive and cache the dictionary before it moves out of the segment $\tau_b+\tau_t+\tau_d \ll \tau_s$, must be met. Typically, $\tau_s$ is of the order of several seconds, and $\tau_t$, $\tau_d$ are in the order of milliseconds; $\tau_b$ can be chosen in a straightforward manner to satisfy the above inequality. We will now study the choice of $\tau_b$ as a function of data packet arrival rate at the vehicle.\looseness=-1

\begin{figure}[ht!]
 \centering
  \includegraphics[trim={0cm 0 0cm 0.8cm},clip,scale=0.45]{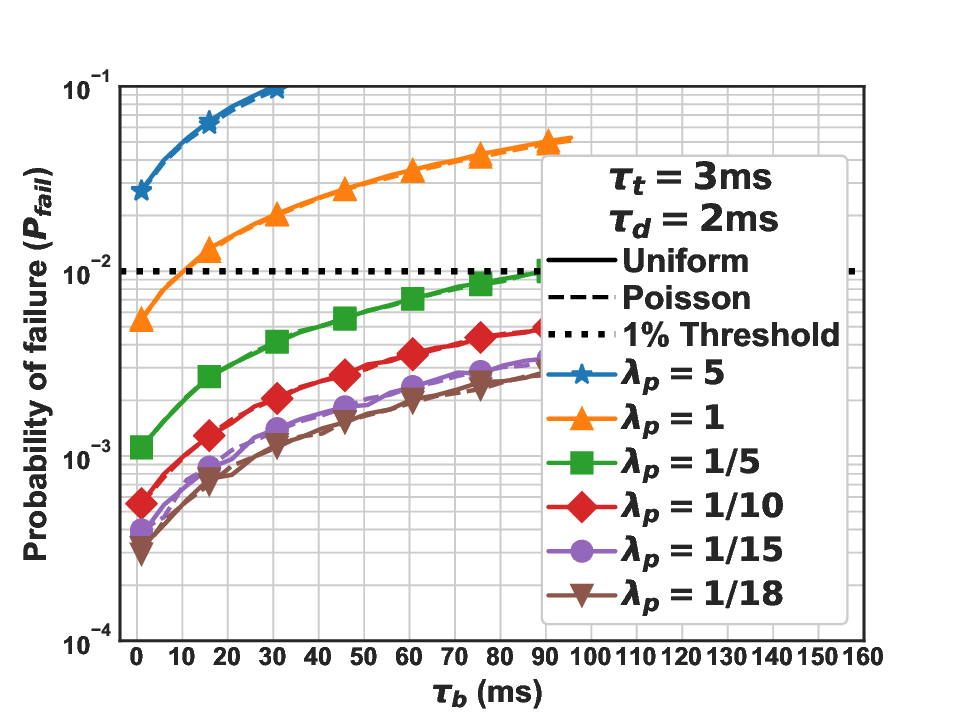}
  \caption{Analysis of $P_{fail}$ with respect to $\tau_b$ for different values of packet arrival rate $\lambda_p$ (packets per second), under uniform and Poisson data packet arrival model.}
  \label{fig:P_fail_results_log_1p}
  \vspace{-0.2cm}
\end{figure}

We assume that data packets are disseminated in the network as part of an ongoing stochastic process with rate $\lambda_p$ packets per second. As the vehicle enters the coverage area at a random point in time, the time until the vehicle receives its first packet in the segment is modeled as $\tau_a$ seconds. Meanwhile, $\tau_p=\frac{1}{\lambda_p}$ seconds is the mean inter-packet arrival interval at the vehicle. A complete failure in embedding the segment ID into a packet occurs if the vehicle receives the first data packet before the dictionary is ready to be used, i.e., $\tau_l+\tau_t+\tau_d > \tau_a$. To ensure timely delivery and usability of the dictionary, i.e., a vehicle can use the dictionary for embedding its segment ID onto $\textbf{BF}_2$, the broadcast interval should satisfy a general design constraint $\tau_b + \tau_t + \tau_d < \tau_p$.
However, in realistic scenarios, both $\tau_l$ and $\tau_a$ at the vehicle are random variables and independent of each other. As a result, even if $\tau_b$ is chosen to satisfy the above inequality, there is a likelihood that a vehicle receives a data packet before it can process and use the dictionary to determine the segment ID. The probability of this event is defined as the probability of failure, which is given by:\looseness=-1
\begin{IEEEeqnarray}{rCl}
\label{eq:Probability of failure}
P_{fail}= \Pr\left(\frac{\tau_l + \tau_t + \tau_d}{\tau_a}> 1\right).
\end{IEEEeqnarray}

System designers can appropriately configure the parameters $\tau_b$ and $\tau_p$ based on $\tau_t$, $R$, and $\tau_s$ to ensure a negligible $P_{fail}$.\looseness=-1

\subsubsection{Experimental Results on Probability of Failure}
We empirically evaluate the variation of $P_{fail}$ with respect to $\tau_b$ under different assumptions about packet arrival patterns and network conditions. For the experiments, we use $\tau_t=3$ms, $\tau_d=2$ms, $\tau_b \in [1,100]$ms and $\lambda_p$ with six different values, namely, $5$, $1$, $1/5$, $1/10$, $1/15$, and $1/18$ packets per second. We assume that a vehicle enters the farthest segment in the coverage area of the RSU with the maximum speed $v=100$km per second, and the corresponding $\tau_s =18$ seconds. We simulate $\tau_a$ under two models (i) uniform distribution, where $\tau_a \sim \mathbf{U}(0, \tau_p)$, and (ii) Poisson distribution, where $\tau_a \sim \mathbf{Exp}(\lambda_p)$, such that $\mathbf{Exp}(\lambda_p)$ denotes the exponential distribution with rate $\lambda_p$.\looseness=-1

To compute $P_{fail}$ in \eqref{eq:Probability of failure}, $10^6$ samples are taken for each setting of $\tau_b$ and $\tau_p$. For each of $10^6$ samples, we compare $\tau_l+\tau_d+\tau_t$ with $\tau_a$ and count the fraction of failure events as defined in \eqref{eq:Probability of failure} for both uniform and exponential distributions. The corresponding results are plotted in Fig. \ref{fig:P_fail_results_log_1p}, where $P_{fail}$ is plotted against $\tau_b$, which \bl{depicts} that $\tau_b$ can be flexibly chosen over a wide range of values while maintaining $P_{fail}$ well below $1\%$ under varying values of $\tau_p$. These results confirm that $\tau_b$ can be chosen appropriately to ensure the practicality of the proposed spatial-provenance protocol.\looseness=-1 

Apart from the failure events captured in \eqref{eq:Probability of failure}, embedding errors may occur when a vehicle receives a packet near a segment boundary and moves into the next segment before it completes the embedding process. Specifically, this error arises if the vehicle queries its cache and embeds the segment ID of the new segment rather than the segment in which the packet was received. The likelihood of such an error depends on the speed of the vehicle and the time interval between receiving the packet and forwarding it after embedding the segment ID. In realistic scenarios, this interval is of a few milliseconds, during which the vehicle covers only a very short distance. Consequently, the probability of embedding an incorrect segment ID due to vehicle movement near the boundaries of the segment ID during this interval is negligibly small.\looseness=-1

\section{Privacy and Security Aspects of the Spatial-Provenance Protocol} \label{sec:Security_analysis}
In this section, we will first quantify the privacy on the locations of the vehicles under our relaxed-privacy model. After that, we discuss the resilience of the proposed spatial-provenance protocol against standard security threats, and subsequently explain how it helps in localizing the segment affected by a jamming attack.\looseness=-1
\subsection{Quantification of Relaxed-Privacy}\label{sec:quantification of relaxed privacy}
In this section, we quantify the privacy of the location of the vehicle offered by our proposed protocol. Let the precision of the GPS coordinates be $x$ square meters and the coverage area of the RSU be $y$ square meters. Then the coverage area of the RSU can be partitioned into $M \triangleq \frac{y}{x}$ Voronoi regions, and as a result, the RSU can resolve an uncertainty of $\log_2{(M)}$ bits (information-theoretically) each time a vehicle shares its GPS coordinates. However, when using the proposed relaxed-privacy framework, if the coverage area of the RSU is divided into $r$ segments, for some $r \in \mathbb{Z}_{+}$, then each segment roughly comprises $\frac{M}{r}$ GPS Voronoi regions. Therefore, with the proposed protocol, the RSU can resolve an uncertainty of $\log_2{(r)}$ bits, with a residual uncertainty of $\log_2{(M/r)}$ bits. Thus, relative to using the GPS, our notion of relaxed-privacy can be quantified as $\log_2{(M/r)}$ bits with respect to the RSU.\looseness=-1 

With reference to an external eavesdropper positioned between the RSU and the source vehicle, suppose that the eavesdropper has complete knowledge of the underlying segmentation and the transmission range of the vehicles. Then, when an eavesdropper captures a packet, its uncertainty with respect to the vehicle's location is $\log_2{(2\beta M/r)}$ bits, where $\beta$ captures the number the segments that a vehicle can transmit across. This expression follows because the eavesdropper can only be certain that the packet was either sent or received within a segment that is no more than $\beta$ segments away, and the multiplication factor of 2 accounts for the fact that the packet can be received by the eavesdropper from either direction. Regarding the vehicles responsible for forwarding the packets, the location privacy of the $i$-th vehicle of a $h$-hop path, when the packet is received at the $j$-th vehicle of the $h$-hop path, is $\log_2{(|j-i|\beta M/r)}$, where $i \neq j$.\looseness=-1

\subsection{Security Analysis}\label{sec:security}
In this section, we will discuss the resilience of our proposed methods against the packet drop attacks and the man-in-the-middle (MitM) attacks.

{\bf Packet-Drop Attacks :}
Our proposed protocol is resilient to packet-drop attacks under the assumption that it uses some of the advanced variants of AODV (refer Section \ref{sec:Routing Constraints}), which inherently handle such threats \cite{PDA_AODV, PDA_AODV_2}. Moreover, our proposed spatial-provenance protocol operates independently of the routing protocols, and in case of apprehensions of packet-drop attacks in a given deployment, our proposed spatial-provenance protocol can be layered on advanced routing protocols \cite{Hu2005,PDA4} having strong packet-drop defense mechanisms.\looseness=-1

{\bf MitM Attacks:}
In our spatial-provenance framework, each forwarding node embeds its segment ID onto $\textbf{BF}_2$ using $k_2$ Hash functions (refer to Section \ref{sec:BF_embedding}). To execute a MitM attack, an attacker may flip bits in the Bloom filter to induce a false positive event. However, such tampering is easily detectable at the RSU, as the RSU possesses the secret keys of all nodes, and it also independently verifies each contribution by recomputing Hash values during provenance recovery.\looseness=-1 

For executing a successful MitM attack in the context of this work, an adversary has to insert the information in CLBF in such a way that the RSU recovers a single path from the received CLBF, and the recovered path is a false path, other than the true path, which simultaneously satisfies the routing constraints mentioned in Section \ref{sec:Routing Constraints}. We now examine the resilience of our protocol against both external and internal MitM attacks.\looseness=-1

\noindent 1) {\bf External MitM Attacks:}
Our spatial provenance protocol uses CLBF, comprising $\textbf{BF}_1$ and $\textbf{BF}_2$. A successful MitM attack requires consistent embedding in both filters. However, $\textbf{BF}_1$ employs edge embedding with node-specific secret keys (refer to Section \ref{sec:BF_ingredients}), which are also used for embedding segment IDs in $\textbf{BF}_2$. Since external adversaries lack these keys, they cannot embed valid information in both filters. Such attacks are well studied in the context of Bloom filter-based provenance techniques in \cite[Section 6.1]{amogh}, which demonstrates that the probability of a successful external MitM attack is negligibly small. Since our proposed protocol also employs Bloom filter-based provenance mechanisms, we expect a similarly negligible probability of success for such attacks on our protocol.\looseness=-1
    
\noindent 2) {\bf Internal MitM Attacks:}
Assume an internal malicious node attempts to inject false information into the forwarding packet. The internal malicious node has its own secret key and can embed a false segment ID during its forwarding operation, which is localized to its own contribution in the overall provenance. However, with very high probability, an internal node cannot alter the segment ID embedded by other nodes in the Bloom filter, as the entries in the Bloom filter are generated using the secret keys, which are unique to each legitimate node. Such internal MitM attacks are well studied in \cite[Section 6.1]{amogh}, which demonstrates that the probability of success of such an attack is negligibly small for Bloom filter-based provenance schemes. Similarly, the usage of Bloom filter-based provenance in our proposed protocol makes it robust to internal MitM attacks with high probability.\looseness=-1

\subsection{Security Use-Case for Spatial-Provenance}

\begin{figure*}
    \centering
    \includegraphics[trim={0.0cm 0.0cm 0cm 0cm},clip,scale=0.5]{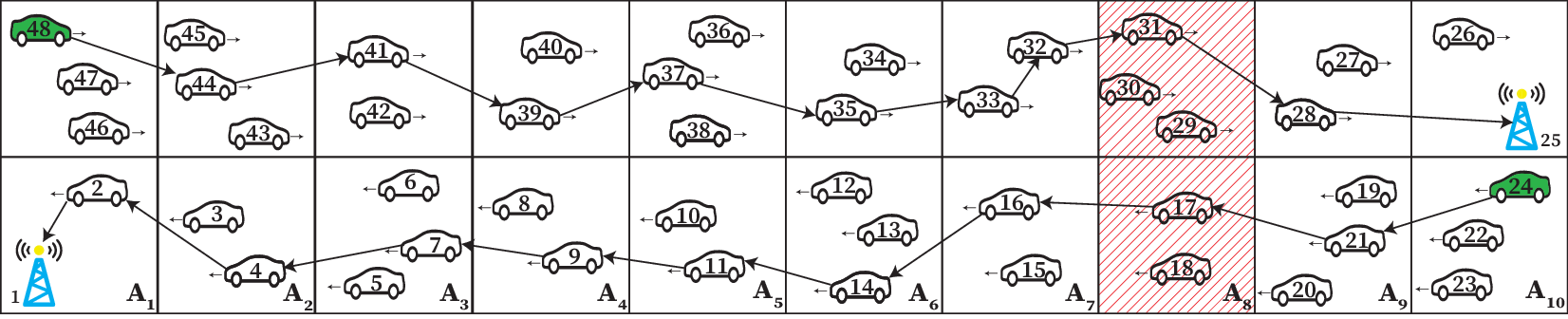}
    \caption{The figure illustrates a security use-case of spatial-provenance in a dual-road situation with 48 nodes dispersed throughout 10 segments. On one side of the road, 24 vehicles are moving towards the RSU, which is in segment $A_1$. On the other side of the road, vehicles are moving towards the RSU, which is located in segment $A_{10}$. A low-power jammer is present in Segment $A_8$.\looseness=-1}
    \label{fig:Network_threat_double road_v1}
    \vspace{-0.4cm}
\end{figure*}

In this section, first, we will highlight a practical use case in vehicular networks, wherein our proposed method of recovering spatial-provenance will be useful in mitigating a denial-of-service threat. Subsequently, we will explain how this use-case was implemented on our testbed.\looseness=-1

Consider a vehicular network where 23 nodes, denoted by $I_2, I_3,\ldots, I_{24}$, are distributed throughout ten 
segments along a linear road, as depicted in the lower single lane of Fig. \ref{fig:Network_threat_double road_v1}. Assume that a node located in segment $A_{10}$ establishes communication with the RSU using a multi-hop network and exclusively conveys only the path information in the packet, which refers to the metadata regarding the route followed by the packet. Under these circumstances, the RSU will solely regain the route but not the precise location of the intermediate nodes. \bl{Suppose a low-power jammer that can inject jamming energy within a segment becomes active in segment $A_8$, causing all the nodes in the segment $A_8$ to lose the ability to communicate with the RSU.} Consequently, the nodes located in $A_9$ and $A_{10}$ segments will be unable to establish communication with the RSU in a multi-hop manner. After the activation of the jammer, the RSU will no longer have visibility of the network beyond segment $A_7$. It will then infer that 8 out of the 23 nodes have either malfunctioned or relocated outside the coverage region. The RSU will have a pseudo picture of the network that its coverage area has only 16 nodes. Nevertheless, the RSU fails to detect any indication of a problem within its coverage area, let alone identify the specific place where the problem occurred. However, as the nodes from the jammed segment transition to an un-jammed segment, i.e., $A_7$, the RSU will detect that nodes $I_{18}$ and $I_{17}$ have initiated communication. The RSU will be unable to determine the precise underlying cause of the issue that is preventing certain nodes from communicating temporarily.\looseness=-1   

In the same scenario as explained above, assume that the nodes use a way to communicate both the path information and the spatial-location information to the RSU, using $\textbf{BF}_1$ and $\textbf{BF}_2$ as per our proposed method. In this case, if $I_{17}$ and $I_{18}$ do not reply for a certain time, and then start responding from segment $A_7$ after some time. Since the RSU recovers the spatial-provenance from every packet, it will suspect that there is an issue beyond segment $A_7$. The RSU will ping all the nodes in segments $A_8, ~A_9$ and $A_{10}$. However, throughout the specified response window, the RSU will not get any response from nodes located in these segments due to the presence of a jammer. Therefore, the spatial-provenance method has reduced some uncertainty in localizing the threat vector, and the RSU has a rough idea about the problem area that there is some problem beyond the segment $A_7$.\looseness=-1

If one wishes to further localize the threat segment, then one can envision a dual road, wherein there is one more line of traffic moving in the opposite direction handled by another RSU, as shown in Fig. \ref{fig:Network_threat_double road_v1}.
If both the RSUs have a backhaul coordination between the two,  then they can use the information to precisely localize the threat segment. In this case, the jamming area can be precisely localized at the segment level, as the RSU on the right side will not be able to communicate with the nodes that are lying beyond the segment $A_9$, i.e., $A_1$ to $A_8$, and the RSU on the left-hand side will not be able to communicate with the nodes lying beyond the segment $A_7$ of the dual road.\looseness=-1

To demonstrate this security use-case on our testbed, we considered a network of 10 XBee devices spread across the coverage area of the RSU, comprising 10 segments, each of length 100 meters. Then, we mimicked the impact of a low-power jammer by disabling the radio functionalities of the underlying wireless devices belonging to a certain segment chosen at random. Subsequently, messages were exchanged between the devices and the RSU according to the proposed protocol in order to assist the RSU in localizing the rest of the devices. By comparing the location information of the nodes before and after the mimicked jamming attack, the RSU was able to identify the segment beyond which the jammer existed with high probability.\looseness=-1 

Building on this general case, we now consider a practical scenario where the coverage area of the jammer is known in terms of the number of affected segments. In such situations, the RSU can choose an appropriate value of $r$, possibly after negotiation with the vehicles. For example, if a segment spans 250m but the jammer affects just 50m inside a segment, precise localization becomes challenging. To address this, the RSU may renegotiate finer segment granularity with vehicles, provided vehicles are willing to share location at a finer resolution. Input regarding the radiation capability of the jammer can be used by the RSU to appropriately select the number of segments $r$ into which the RSU divides its coverage area, and then this can be used to design the spatial-provenance framework accordingly to choose the appropriate parameters for localization, such as $m_1$, $m_2$, $k_1$, and $k_2$.\looseness=-1

\section{Discussion}\label{sec:discussion}

In this study, we adopt a linear segmentation model, assuming the coverage area of the RSU as a straight line, which is common in real-world deployments like highways and urban corridors \cite{1,3,4}. However, our proposed framework can be extended to other practical scenarios where the coverage area of the RSU can be segmented radially using concentric circles and angular sectors. In such a case, our proposed framework can be applied in each of its stages, including the modeling of relaxed-privacy, the embedding procedure at the vehicles, the recovery algorithm at the RSU, and the theoretical analysis on the optimization of embedding parameters with only minor modifications.\looseness=-1

The objective of this study was to collectively address the localization requirements of the RSU and the privacy concerns of the vehicles. Through the notion of relaxed-privacy, we have presented a low-latency method to embed and recover spatial-provenance. When using the proposed method, we have derived the analytical expressions on the false positive rate for learning the spatial-provenance. Extensive simulation results have been provided to validate the same. In addition, testbed results confirm that a few bits in the packet header for conveying the spatial-provenance will help the RSU to localize certain threats.\looseness=-1

In our method, we have used two correlated Bloom filters to convey the path and location information. As a result, we believe that false positive events arising from one Bloom filter can be potentially corrected by using the events from the other Bloom filter since they are correlated. Therefore, an intriguing direction for future research involves jointly optimizing the parameters of two Bloom filters $\textbf{BF}_1$ and $\textbf{BF}_2$. This work primarily focused on the homogeneous privacy of the vehicles, i.e., when all the vehicles agree to share the same precision of localization. In practice, vehicles may have varying privacy preferences. Some may agree to reveal location at a fine granularity (e.g., $3r$), while others prefer coarser levels (e.g., $2r$ or $r$). In such cases, to apply our homogeneous model, consensus can be based on the coarsest acceptable granularity among the vehicles. For instance, since agreeing to $3r$ inherently permits $2r$ or $r$, privacy is preserved unless finer granularity (e.g., $4r$) is enforced, which our model prevents. Therefore, our suggested protocol respects individual privacy thresholds even with heterogeneous privacy constraints.\looseness=-1 

\section*{Acknowledgement}
This work was supported by the project titled ``Development of Network Provenance Techniques for Monitoring Wireless Networks" under Contract for Acquisition of Research Services (CARS) from the Scientific Analysis Group (SAG), DRDO, New Delhi, India.\looseness=-1

\bibliographystyle{IEEEtran} 

\bibliography{refs} 

\vspace{0.6cm}

\noindent \textbf{Manish Bansal} is a Ph.D Research Scholar at IIT Delhi. His research interests include 5G wireless security, provenance, sensor networks and IoT.

\vspace{0.1cm}

\noindent \textbf{Pramsu Shrivastava} holds a B.Tech in Electrical Engineering from IIT Delhi and has worked in wireless networks and computer architecture. He is passionate about low-level systems and exploring cutting-edge technologies.

\vspace{0.1cm}

\noindent \textbf{Harshan Jagadeesh} is an Associate Professor at the Department of Electrical Engineering, IIT Delhi. His research interests are in the broad areas of network security, information theory and coding theory.
\end{document}